
\documentclass[]{interact}

\usepackage{epstopdf}
\usepackage[caption=false]{subfig}


\usepackage[natbibapa,nodoi]{apacite}
\setlength\bibhang{12pt}

\usepackage{algorithm}
\usepackage{algorithmic}
\usepackage{graphicx, caption}
\usepackage{tikz}
\usetikzlibrary{shapes,arrows}
\usetikzlibrary{arrows.meta,
                positioning}

\newtheorem{assumption}{Assumption}
\theoremstyle{plain}
\newtheorem{theorem}{Theorem}[section]
\newtheorem{lemma}[theorem]{Lemma}

\theoremstyle{definition}

\theoremstyle{remark}
\newtheorem{remark}{Remark}

\begin{document}

\articletype{ARTICLE TEMPLATE}

\title{Fast data-driven iterative learning control for linear system with output disturbance}

\author{
\name{Jia Wang \textsuperscript{a}\thanks{CONTACT Jia Wang. Email: jia.wang@esat.kuleuven.be}, Leander Hemelhof \textsuperscript{a}, Ivan Markovsky \textsuperscript{b,c} and Panagiotis Patrinos \textsuperscript{a}}
\affil{\textsuperscript{a} Department of Electrical Engineering (ESAT-STADIUS), KU Leuven, Kasteelpark Arenberg 10, 3001 Leuven, Belgium
\\
\textsuperscript{b} Credible Data-driven Models Group, International Centre for Numerical Methods in Engineering (CIMNE), Gran Capità, 08034 Barcelona, Spain
\\
\textsuperscript{c} Catalan Institution for Research and Advanced Studies (ICREA), Pg. Lluis Companys 23, 08010 Barcelona, Spain
}
}

\maketitle

\begin{abstract}
This paper studies data-driven iterative learning control (ILC) for linear time-invariant (LTI) systems with unknown dynamics, output disturbances and input box-constraints. Our main contributions are: 1) using a non-parametric data-driven representation of the system dynamics, for dealing with the unknown system dynamics in the context of ILC, 2) design of a fast ILC method for dealing with output disturbances, model uncertainty and input constraints. A complete design method is given in this paper, which consists of the data-driven representation, controller formulation, acceleration strategy and convergence analysis. A batch of numerical experiments and a case study on a high-precision robotic motion system are given in the end to show the effectiveness of the proposed method.
\end{abstract}

\begin{keywords}
Data-driven control; iterative learning control; convergence analysis; system identification
\end{keywords}

\section{Introduction}

Iterative learning control (ILC) was proposed for repetitive control tasks and has been widely used to improve the tracking accuracy in many industries, for example printing systems \citep{Rozario_2019}, atomic force microscopes \citep{Tien_2005} and continuously stirred tank reactors \citep{Chi_2018}. By using the previous tracking errors and control signals, ILC updates the current control signal to steer the system output to be as close as possible to the reference trajectory over a ﬁxed time interval.

The existing solutions of ILC mainly focus on Lyapunov function-based adaptive ILC \citep{Tayebi_2004}, contraction mapping-based proportional derivative ILC \citep{Park_1999} and optimality-based ILC \citep{Barton_2011}. Optimality-based ILC is the most popular choice for real-world applications since it can reject the undesirable large transient behavior \citep{Chi_2018}. However, it has to use the model information in the algorithm design, e.g., the impulse response or state-space representation, hence, it is classified as a model-based control method \citep{Chi_2015}. In the case that the nominal model is not given or ﬁrst-principle modelling is expensive to use, system identification is required. One of the most powerful methods is subspace identification \citep{Overschee_2012}, which has been used successfully in the ILC scenario \citep{Nijsse_2001,Janczak_2013}, but only the white Gaussian disturbance is considered in the existing literature. In practice, the system output can be perturbed with structured disturbances such as the sine wave \citep{Markovsky_2020} and periodic disturbances \citep{Bodson_2001}. Therefore, the data-driven representation method that can handle the structured disturbance effectively is worth investigating. Recently, Willems et al.’s fundamental lemma \citep{Willems_2005} has received a great attention in the control community, which offers a non-parametric data-driven representation of the system dynamics based on available input-output measurements \citep{Markovsky_2008,Markovsky_2021}, and has been successfully used in many applications \citep{Elokda_2021,Carlet_2020}. Authors in the recent work \citep{Jiang_2022} propose a novel data-driven ILC for linear time-invariant (LTI) systems with the aid of Willems et al.’s fundamental lemma, which is then extended to point-to-point ILC tasks \citep{Jiang_2023}. In \cite{Wang_2023}, a modiﬁed Willems et al.’s fundamental lemma is proposed to relax assumptions on the controllability and persistency of excitation, and then combined with the optimality-based ILC for LTI systems. However, the methods above assume no output disturbances. Inspired by \cite{Jiang_2022}, we apply a non-parametric data-driven representation to the optimality-based ILC under output disturbances, which can be promising to handle structured disturbances in the repetitive control process, due to its non-parametric nature. This constitutes the first motivation of this work.

On the other hand, the convergence speed is one of the most important performance indicators for the ILC. In recent years, there are many attempts to design ILC algorithms with a faster convergence speed. In \cite{Harte_2005,Ratcliffe_2008}, the classical gradient-based ILC is designed and robust monotonic convergence under multiplicative uncertainty is established. In \cite{Chu_2009}, an accelerated version is proposed, which uses successive projection and a momentum term in the control signal update. In \cite{Geng_2015}, a Quasi-Newton ILC method is proposed to improve the convergence speed and convergence is proved. Moreover, the accelerated gradient method \citep{Nesterov_1983} is combined with the gradient-based ILC in \cite{Gu_2019}, and a faster convergence speed is achieved. However, the above mentioned accelerated methods assume to know the exact model and no output disturbances, which is impractical in many applications. Therefore, it is important to develop a fast ILC method based on the available, but inexact information. In \cite{Devolder_2014}, an inexact accelerated gradient method is proposed, which is an extension of the exact method in \cite{Nesterov_2005}. Inspired by \cite{Devolder_2014}, we attempt to combine the inexact accelerated method with the classical optimality-based ILC, and present the convergence rate under an inexact model and output disturbances, which is the second motivation of this paper.

Motivated by the above discussions, non-parametric data-driven representation based fast ILC is investigated in this paper, which aims to improve the ILC convergence speed and robustness under unknown and inexact information. The main contributions are summed up as follows:

\begin{itemize}
  \item Non-parametric data-driven ILC is developed, which can deal efficiently with structured disturbances.
  \item The fast ILC approach with inexact information is developed. Moreover, the convergence rate of applying the inexact Nesterov accelerated gradient method into the ILC scenario is provided.
  \item A hybrid ILC approach is proposed with an empirical but effective switching condition between the fast and classical ILC, which inherits the advantages of these two methods, i.e., faster speed and better asymptotic performance.
\end{itemize}

The remainder of this paper is organized as follows. Section \ref{sec2} gives the problem statement. Section \ref{sec3} formulates the data-driven representation. In Section \ref{sec4}, fast ILC methods with inexact information are developed. Numerical experiments and a case study are given to verify the effectiveness of the proposed ILC framework in Section \ref{sec5}. Finally, Section \ref{sec6} concludes this paper.

\section{Problem statement}\label{sec2}

Consider the following controllable single-input single-output (SISO) LTI system:

\begin{small}
\begin{equation}\label{LTI}
\begin{cases}
  x_{j}(k+1) = Ax_{j}(k)+Bu_{j}(k)\\
  y_{j}(k) = Cx_{j}(k)
\end{cases}
\end{equation}
\end{small}where $j$ denotes the trial index, and $k\in\{0\ldots N\}$ denotes the time index. $N$ is the data length in one trial. $x_{j}(k)\in\mathbb{R}^{n}$, $u_{j}(k)\in Q:=[\underline{u},\overline{u}]\subset\mathbb{R}$ and $y_{j}(k)\in\mathbb{R}$ represent the system state, input and output, respectively. $\underline{u}$ and $\overline{u}$ are the lower and upper bound on $u_{j}(k)$. $A$, $B$, $C$ are real but unknown matrices with appropriate dimensions. The initial state is the same for all trials, i.e., $x_{j}(0)=x_{0}$ and assumed to be unknown.

Taking into account (\ref{LTI}) with $k=0\ldots N$, the relation between input and output is

\begin{small}
\begin{equation*}
\begin{split}
y_{j}(1)&=Cx_{j}(1)\\
&=CAx_{j}(0)+CBu_{j}(0) \\
y_{j}(2)&=Cx_{j}(2)=CAx_{j}(1)+CBu_{j}(1)\\
&=CA^{2}x_{j}(0)+CABu_{j}(0)+CBu_{j}(1) \\
&\ \vdots \\
y_{j}(N)&=Cx_{j}(N)=CAx_{j}(N-1)+CBu_{j}(N-1)\\
&=CA^{N}x_{j}(0)+CA^{N-1}Bu_{j}(0)+CA^{N-2}Bu_{j}(1)\\
&\ \ \ +\cdots+CBu_{j}(N-1)
\end{split}
\end{equation*}
\end{small}Then the input/output mapping in the trial-domain can be written as

\begin{small}
\begin{equation}\label{lifted_system}
y_{j}
=
Gu_{j}+c
\end{equation}
\end{small}where

\begin{small}
\begin{subequations}\label{shifted_trajectory}
\begin{align}
& u_{j}
=
\begin{bmatrix}
  u_{j}(0) & u_{j}(1) & \ldots & u_{j}(N-1)
\end{bmatrix}^{T} \\
& y_{j}
=
\begin{bmatrix}
  y_{j}(1) & y_{j}(2) & \ldots & y_{j}(N)
\end{bmatrix}^{T}
\end{align}
\end{subequations}
\end{small}and

\begin{small}
\begin{equation}\label{convolutional_matrix}
G
=
\begin{bmatrix}
  CB & 0 & \ldots & 0 \\
  CAB & CB & \ddots & \vdots \\
  \vdots & \ddots & \ddots & 0 \\
  CA^{N-1}B & \ldots & CAB & CB
\end{bmatrix},\ \
c
=
\begin{bmatrix}
  CA \\
  CA^{2} \\
  \vdots \\
  CA^{N}
\end{bmatrix}
x_{0}
\end{equation}
\end{small}with $u_{j}\in\mathcal{Q}$ and $\mathcal{Q}=Q\times Q\times\cdots\times Q$ is the Cartesian product. The following assumption is given to guarantee the invertibility of $G$.

\begin{assumption}\label{assumption_invertible}
The system (\ref{LTI}) has relative degree $1$, i.e., $CB\neq0$.
\end{assumption}

For a given reference trajectory $r=\begin{bmatrix}r(1) & \ldots &  r(N)\end{bmatrix}^{T}$, the control objective is to find the desired input $u^{*}$ that is a solution of the following constrained least-squares problem:

\begin{small}
\begin{equation}\label{original_LS}
\min\limits_{u\in\mathcal{Q}}J(u)
=
\frac{1}{2}\|r-(Gu+c)\|_{2}^{2}
\end{equation}
\end{small}in which the gradient of the cost function $J(u)$ is

\begin{small}
\begin{equation}\label{original_gradient}
\nabla J(u)
=
-G^{T}
\begin{pmatrix}
r-(Gu+c)
\end{pmatrix}
\end{equation}
\end{small}Since the box-constraint set $\mathcal{Q}$ has a simple structure, an effective way to solve (\ref{original_LS}) is the projected gradient method:

\begin{small}
\begin{equation}\label{original_projection_gradient}
u_{j+1}
=
\Pi_{\mathcal{Q}}
\begin{pmatrix}
u_{j}
-
\rho
\nabla J(u_{j})
\end{pmatrix}
\end{equation}
\end{small}where $\rho\in(0,2/L)$ is the step-size, $L=\lambda_{\max}(G^{T}G)$ is the Lipschitz constant of $J(u)$, $\lambda_{\max}(\cdot)$ denotes the maximal eigenvalue and $\Pi_{\mathcal{Q}}(\cdot)$ is the Euclidean projection on the set $\mathcal{Q}$. If the input is unconstrained, we have $u^{*}=(G^{T}G)^{-1}G^{T}(r-c)=G^{-1}(r-c)$ since $G$ is invertible for a SISO system according to Assumption \ref{assumption_invertible}. Note that the iteration number $j$ in (\ref{original_projection_gradient}) is also the trial index in the ILC sense.

In practice, the exact representation matrix $G$ is usually unknown and thus some system identification techniques are required. However, since the output channel is perturbed by the measurement disturbance, only $\tilde{y}=y+d$ can be collected as offline data, in which $d=\begin{bmatrix}d(1) & \ldots & d(T)\end{bmatrix}^{T}$ and $T$ is the offline data length, then the matrix $\tilde{G}$ obtained from the system identification is inexact. Moreover, only $\tilde{y}_{j}=y_{j}+d_{j},j\geq0$ can be measured in the online control, with $d_{j}=\begin{bmatrix}d_{j}(1) & \ldots & d_{j}(N)\end{bmatrix}^{T}$, which leads to an inexact gradient as $\tilde{\nabla}J(u_{j})=-\tilde{G}^{T}(r-\tilde{y}_{j})$ at the $j$-th trial. Even if we have the exact representation matrix $G$, no disturbances and no constraints on $u$, $G$ can be ill-conditioned and thus sensitive to system inversion \citep{Gu_2019}. On the other hand, fast convergence speed of the learning controller is strongly demanded for many real-time applications \citep{Geng_2015}. The reasons above show that formula (\ref{original_projection_gradient}) may be limited in our setting. Therefore, the goal of this paper is to design a fast learning algorithm to generate an input sequence $\{u_{j}\}$ such that

\begin{small}
\begin{equation*}
\lim\limits_{j\rightarrow\infty}\|\varepsilon_{j}\|_{2}\leq\tau_{\varepsilon},\
\lim\limits_{j\rightarrow\infty}\|u^{*}-u_{j}\|_{2}\leq\tau_{u}
\end{equation*}
\end{small}with the inexact information $e_{j}$ and $\tilde{\nabla}J(u_{j})$, in which $\varepsilon_{j}=r-y_{j}$ and $e_{j}=r-\tilde{y}_{j}$ is the exact and inexact tracking error, respectively. Positive numbers $\tau_{\varepsilon}$ and $\tau_{u}$ are given accuracy levels.

\section{Data-driven representation}\label{sec3}

In this paper, a non-parametric data-driven representation of the system dynamics is used. Some basic formulations are given in the following subsection, see \cite{Markovsky_2021} for details. The noise filtering technique is introduced in the second subsection in order to obtain a more precise representation $\tilde{G}$.

\subsection{Basic formulation}

Let $w_{\mathrm{d}}=\begin{bmatrix}w_{\mathrm{d}}(0) & \ldots & w_{\mathrm{d}}(T)\end{bmatrix}^{T}$ be a $(T+1)$-sample long offline data trajectory of the SISO LTI system (\ref{LTI}), in which $w_{\mathrm{d}}(i)=\begin{bmatrix}u_{\mathrm{d}}(i) & y_{\mathrm{d}}(i)\end{bmatrix}^{T}$ and the subscript `$\mathrm{d}$' denotes `data'. Then the $2(K+1)\times(T-K+1)$-dimensional Hankel matrix for the data trajectory is formed by

\begin{small}
\begin{equation}\label{Hankel}
H_{K+1}(w_{\mathrm{d}})
=
\begin{bmatrix}
  w_{\mathrm{d}}(0) & w_{\mathrm{d}}(1) & \ldots & w_{\mathrm{d}}(T-K) \\
  w_{\mathrm{d}}(1) & w_{\mathrm{d}}(2) & \ldots & w_{\mathrm{d}}(T-K+1) \\
  \vdots & \vdots & \ddots & \vdots \\
  w_{\mathrm{d}}(K) & w_{\mathrm{d}}(K+1) & \ldots & w_{\mathrm{d}}(T)
\end{bmatrix}
\end{equation}
\end{small}Permute the Hankel matrix $H_{K+1}(w_{\mathrm{d}})$ and partition as

\begin{small}
\begin{equation}\label{partition_U_Y}
H_{K+1}(u_{\mathrm{d}})
=
\begin{bmatrix}
  U_{\mathrm{p}} \\
  U_{\mathrm{f}} \\
  U_{\mathrm{b}}
\end{bmatrix}
,
\ \ \
H_{K+1}(y_{\mathrm{d}})
=
\begin{bmatrix}
  Y_{\mathrm{b}} \\
  Y_{\mathrm{p}} \\
  Y_{\mathrm{f}}
\end{bmatrix}
\end{equation}
\end{small}where $U_{\mathrm{p}},Y_{\mathrm{p}}\in\mathbb{R}^{T_{\mathrm{ini}}\times(T-K+1)}$, $U_{\mathrm{f}},Y_{\mathrm{f}}\in\mathbb{R}^{N\times(T-K+1)}$ and $U_{\mathrm{b}},Y_{\mathrm{b}}\in\mathbb{R}^{1\times(T-K+1)}$.  $K=T_{\mathrm{ini}}+N$ and $T_{\mathrm{ini}}\geq l$ is the initial trajectory length, in which $l$ is the lag of system (\ref{LTI}). Then define $\mathcal{H}_{K}(w_{\mathrm{d}}):=\begin{bmatrix}U_{\mathrm{p}}^{T} & Y_{\mathrm{p}}^{T} & U_{\mathrm{f}}^{T} & Y_{\mathrm{f}}^{T}\end{bmatrix}^{T}\in\mathbb{R}^{2K\times(T-K+1)}$. Note that the lifted model (\ref{convolutional_matrix}) yields a time shift in trajectories $u_{j}$ and $y_{j}$, see (\ref{shifted_trajectory}), therefore, $u_{\mathrm{d}}(T)$ and $y_{\mathrm{d}}(0)$ should be removed from the data trajectory, or equivalently, remove the blocks $U_{\mathrm{b}}$ and $Y_{\mathrm{b}}$ in (\ref{partition_U_Y}) when constructing the matrix $\mathcal{H}_{K}(w_{\mathrm{d}})$.

Choose the input trajectory $\begin{bmatrix}u_{\mathrm{d}}(0) & \ldots & u_{\mathrm{d}}(T-1)\end{bmatrix}^{T}$ that is persistently exciting of order $K+n$, then according to the fundamental lemma \citep{Willems_2005}, there exists a vector $g\in\mathbb{R}^{T-K+1}$ such that

\begin{small}
\begin{equation}\label{linear_combination}
\mathcal{H}_{K}(w_{\mathrm{d}})
g
=
\begin{bmatrix}
  u_{\mathrm{ini}} \\
  y_{\mathrm{ini}} \\
  u_{\mathrm{f}} \\
  y_{\mathrm{f}}
\end{bmatrix}
\end{equation}
\end{small}The trajectory in the right hand side of (\ref{linear_combination}) follows the lifted system (\ref{lifted_system}), i.e., $y_{\mathrm{f}}=Gu_{\mathrm{f}}+c(u_{\mathrm{ini}},y_{\mathrm{ini}})$, in which $u_{\mathrm{ini}},y_{\mathrm{ini}}\in\mathbb{R}^{T_{\mathrm{ini}}}$ is the initial condition, $c(u_{\mathrm{ini}},y_{\mathrm{ini}})$ is the initial response related with $u_{\mathrm{ini}},y_{\mathrm{ini}}$, and trajectories $u_{\mathrm{f}},y_{\mathrm{f}}\in\mathbb{R}^{N}$ is to be determined.

Since the system is assumed to be controllable, the initial input trajectory $u_{\mathrm{ini}}$ determines $x_{0}$ as follows:

\begin{small}
\begin{equation}
x_{0}
=
\begin{bmatrix}
  A^{T_{\mathrm{ini}}-1}B\ \ \ &A^{T_{\mathrm{ini}}-2}B & \ldots & B
\end{bmatrix}
\begin{bmatrix}
  u_{\mathrm{ini}}(0) \\
  u_{\mathrm{ini}}(1) \\
  \vdots \\
  u_{\mathrm{ini}}(T_{\mathrm{ini}}-1)
\end{bmatrix}
\end{equation}
\end{small}which steers the system from the origin to $x_{0}$, and $y_{\mathrm{ini}}=\begin{bmatrix}y_{\mathrm{ini}}(1) & \ldots & y_{\mathrm{ini}}(T_{\mathrm{ini}})\end{bmatrix}^{T}$ is the corresponding output trajectory. In this paper, the initial condition $(u_{\mathrm{ini}},y_{\mathrm{ini}})$ is assumed to be unknown.

According to the generalized persistency of excitation condition in \cite{Markovsky_2023}, for $K\geq l$, the range of $\mathcal{H}_{K}(w_{\mathrm{d}})$ captures all possible trajectories of (\ref{lifted_system}) if and only if $rank\begin{pmatrix}\mathcal{H}_{K}(w_{\mathrm{d}})\end{pmatrix}=K+n$. However, if the output is perturbed by disturbances, the Hankel matrix $\mathcal{H}_{K}(w_{\mathrm{d}})$ will have full row rank, which violates the above rank condition, such that the system behavior cannot be represented \citep{Markovsky_2023}. In order to deal with the noise, we apply a noise filtering technique in the next subsection.

\subsection{Noise filtering}

In this part, a heuristic method based on the singular value decomposition (SVD) and low rank approximation is used to filter the perturbed Hankel matrix. Specifically, define the matrix with output disturbances as

\begin{small}
\begin{equation}\label{original_Hankel}
\begin{bmatrix}
  U \\
  \tilde{Y}
\end{bmatrix}
:=
\begin{bmatrix}
  U_{\mathrm{p}} \\
  U_{\mathrm{f}} \\
  \tilde{Y}_{\mathrm{p}} \\
  \tilde{Y}_{\mathrm{f}}
\end{bmatrix}
\end{equation}
\end{small}The construction of $\tilde{Y}_{\mathrm{p}}$ and $\tilde{Y}_{\mathrm{f}}$ are similar to (\ref{partition_U_Y}) but with a perturbed trajectory $\tilde{y}_{\mathrm{d}}$. In order to denoise the perturbed part $\tilde{Y}$, we follow the approach of \cite{Demmel_1987}. Specifically, first apply the QR decomposition on the input part as

\begin{small}
\begin{equation}
U^{T}
=
\begin{bmatrix}
  Q_{1} & Q_{2}
\end{bmatrix}
\begin{bmatrix}
  R_{1} \\
  R_{2}
\end{bmatrix}
\end{equation}
\end{small}where $\begin{bmatrix}Q_{1} & Q_{2}\end{bmatrix}$ is an orthogonal matrix, $R_{1}\in\mathbb{R}^{K\times K}$ is an upper triangular matrix and $R_{2}\in\mathbb{R}^{K\times(T-2K+1)}$ is the zero matrix $\mathbf{0}$. Left multiplying $\begin{bmatrix}Q_{1} & Q_{2}\end{bmatrix}^{T}$ with $\begin{bmatrix}U^{T} & \tilde{Y}^{T}\end{bmatrix}$ results in

\begin{small}
\begin{equation}\label{QD_UY}
\begin{bmatrix}
  Q_{1}^{T} \\
  Q_{2}^{T}
\end{bmatrix}
\begin{bmatrix}
  U^{T} & \tilde{Y}^{T}
\end{bmatrix}
=
\begin{bmatrix}
  R_{1} & D_{1} \\
  \mathbf{0} & D_{2}
\end{bmatrix}
\end{equation}
\end{small}In this way, the first block row in (\ref{QD_UY}) has full row rank. In order to reduce the rank of (\ref{original_Hankel}) from $2K$ to $K+n$, apply SVD on the matrix $D_{2}$ as $D_{2}=\mathcal{U}\Sigma\mathcal{V}^{T}$, then the rank $n$ approximation is obtained by

\begin{small}
\begin{equation}\label{svd_n}
\bar{D}_{2}
=
\sum_{i=1}^{n}
\sigma_{i}
\mathcal{U}_{i}
\mathcal{V}_{i}^{T}
\end{equation}
\end{small}where $\sigma_{i}$ is the $i$-th singular value stored in the matrix $\Sigma$. $\mathcal{U}_{i}$ and $\mathcal{V}_{i}$ are the $i$-th column in the matrix $\mathcal{U}$ and $\mathcal{V}$. Then construct the output matrix as

\begin{small}
\begin{equation}
\bar{Y}
=
\begin{pmatrix}
\begin{bmatrix}
  Q_{1} & Q_{2}
\end{bmatrix}
\begin{bmatrix}
  D_{1} \\
  \bar{D}_{2}
\end{bmatrix}
\end{pmatrix}^{T}
\end{equation}
\end{small}and we have $rank\begin{pmatrix}\begin{bmatrix}U^{T} & \bar{Y}^{T}\end{bmatrix}\end{pmatrix}=K+n$. Partition this low rank matrix and define $\mathcal{H}_{K}(\bar{w}_{\mathrm{d}}):=\begin{bmatrix}U_{\mathrm{p}}^{T} & \bar{Y}_{\mathrm{p}}^{T} & U_{\mathrm{f}}^{T} & \bar{Y}_{\mathrm{f}}^{T}\end{bmatrix}^{T}$, then the method introduced in the above subsection can be used. To be specific, there exists a vector $\bar{g}\in\mathbb{R}^{T-K+1}$ such that

\begin{small}
\begin{equation}\label{g_evolution}
\mathcal{H}_{K}(\bar{w}_{\mathrm{d}})
\bar{g}
=
\begin{bmatrix}
  u_{\mathrm{ini}} \\
  y_{\mathrm{ini}} \\
  u_{j} \\
  y_{j}
\end{bmatrix}
\end{equation}
\end{small}for the lifted system (\ref{lifted_system}), in which the subscript `$j$' denotes the trial index. In this way, $\bar{g}$ can be written as

\begin{small}
\begin{equation}
\bar{g}
=
\begin{bmatrix}
  U_{\mathrm{p}} \\
  \bar{Y}_{\mathrm{p}} \\
  U_{\mathrm{f}}
\end{bmatrix}^{\dag}
\begin{bmatrix}
  u_{\mathrm{ini}} \\
  y_{\mathrm{ini}} \\
  u_{j}
\end{bmatrix}
=
\Phi
u_{j}
+
\Psi
\begin{bmatrix}
  u_{\mathrm{ini}} \\
  y_{\mathrm{ini}}
\end{bmatrix}
\end{equation}
\end{small}where $\dag$ denotes the pseudo-inverse. $\Phi\in\mathbb{R}^{(T-K+1)\times N}$ and $\Psi\in\mathbb{R}^{(T-K+1)\times 2T_{\mathrm{ini}}}$ are the partitioning of $\begin{pmatrix}\begin{bmatrix}U_{\mathrm{p}}^{T} & \bar{Y}_{\mathrm{p}}^{T} & U_{\mathrm{f}}^{T}\end{bmatrix}^{T}\end{pmatrix}^{\dag}$ corresponding to $u_{j}$ and $\begin{bmatrix}u_{\mathrm{ini}}^{T} & y_{\mathrm{ini}}^{T}\end{bmatrix}^{T}$. Then, the output trajectory is

\begin{small}
\begin{equation}\label{output_trajectories}
y_{j}
=
\bar{Y}_{\mathrm{f}}
\bar{g}
=
\bar{Y}_{\mathrm{f}}
\Phi
u_{j}
+
\bar{Y}_{\mathrm{f}}
\Psi
\begin{bmatrix}
  u_{\mathrm{ini}} \\
  y_{\mathrm{ini}}
\end{bmatrix}
\end{equation}
\end{small}The above formulation has a similar structure to (\ref{lifted_system}), i.e., $\bar{Y}_{\mathrm{f}}\Phi$ corresponds to the representation matrix $G$ and $\bar{Y}_{\mathrm{f}}\Psi\begin{bmatrix}u_{\mathrm{ini}}^{T} & y_{\mathrm{ini}}^{T}\end{bmatrix}^{T}$ corresponds to the initial response. On the other hand, the data-driven representation matrix should have the same lower triangular structure as $G$ due to causality, therefore, a projection operator $\mathcal{P}_{0}(\cdot)$ is applied on $\bar{Y}_{\mathrm{f}}\Phi$ in order to impose elements above the diagonal as $0$. In this way, we obtain the data-driven representation from the noisy data:

\begin{small}
\begin{equation}\label{Willems_fundamental_lemma_model}
\tilde{G}=\mathcal{P}_{0}(\bar{Y}_{\mathrm{f}}\Phi)
\end{equation}
\end{small}Then the inexact representation (\ref{Willems_fundamental_lemma_model}) can be used to evaluated the function value and gradient. In the next section, the above inexact information is applied in the classical and accelerated gradient methods.

\section{ILC Design}\label{sec4}

In the case of no offline and online disturbances, the function value and gradient of (\ref{original_LS}) can be evaluated exactly:

\begin{small}
\begin{equation}\label{exact_information}
u_{j}
\rightarrow
\begin{cases}
  J(u_{j})=\frac{1}{2}\|\varepsilon_{j}\|_{2}^{2}\\
  \nabla J(u_{j})=-G^{T}\varepsilon_{j}
\end{cases}
\end{equation}
\end{small}where $\varepsilon_{j}$ is the tracking error without disturbances, i.e., $\varepsilon_{j}=r-y_{j}=r-(Gu_{j}+c)$. However, in practice, only inexact information is available:

\begin{small}
\begin{equation}\label{inexact_information}
u_{j}
\rightarrow
\begin{cases}
  \tilde{J}(u_{j})=\frac{1}{2}\|e_{j}\|_{2}^{2}\\
  \tilde{\nabla} J(u_{j})=-\tilde{G}^{T}e_{j}
\end{cases}
\end{equation}
\end{small}where $e_{j}$ is defined as $e_{j}=r-\tilde{y}_{j}=r-(Gu_{j}+c+d_{j})$. Therefore, the ILC algorithm with the classical (or accelerated) gradient and inexact information (\ref{inexact_information}) is provided to approximate $u^{*}$, which is the optimal solution of (\ref{original_LS}).

The following assumption is first introduced to bound the disturbance:

\begin{assumption}\label{assumption_disturbance}
Each element of the unknown disturbance trajectory is bounded, i.e., $\exists\ \bar{d}>0$ such that $|d_{j}(k)|\leq\bar{d},\ \forall k,j\geq0$.
\end{assumption}

This assumption is commonly used in ILC or data-driven control to ensure the robustness for resisting the worst-case errors \citep{Rozario_2019}. The physical meaning of the $l_{\infty}$ norm of the disturbance is related to sensor errors, for example an accuracy range of $\pm0.005\Omega$ for a pressure sensor. Then, the Euclidean norm of $d_{j}$ is determined by the data length $N$ and bounded by $\|d_{j}\|_{2}\leq\sqrt{N}\bar{d}$. Notice that only finite $N$ is considered in this work since it is more common in practical applications. The following lemma bounds the maximal error between exact and inexact information.

\begin{lemma}\label{lemma_exact_inexact_bound}
The difference between the pairs $(J(u_{j}),\nabla J(u_{j}))$ and $(\tilde{J}(u_{j}),\tilde{\nabla}J(u_{j}))$ is bounded by following equations:

\begin{small}
\begin{subequations}
\begin{align}
&|J(u_{j})-\tilde{J}(u_{j})|\leq\Delta_{1}, \label{bound_cost_value}\\
&\|\nabla J(u_{j})-\tilde{\nabla} J(u_{j})\|_{2}\leq\Delta_{2} \label{bound_gradient}
\end{align}
\end{subequations}
\end{small}and $\Delta_{1}$ and $\Delta_{2}$ are defined as

\begin{small}
\begin{subequations}
\begin{align}
&\Delta_{1} = \sqrt{2\bar{J}N}\bar{d}+\frac{1}{2}N\bar{d}^{2},\\
&\Delta_{2} =
\begin{Vmatrix}
\tilde{G}-G
\end{Vmatrix}_{2}
\sqrt{2\bar{J}}
+
\begin{Vmatrix}
\tilde{G}
\end{Vmatrix}_{2}
\sqrt{N}\bar{d}
\end{align}
\end{subequations}
\end{small}where $\bar{J}=\max\limits_{u\in\mathcal{Q}}J(u)$.

\end{lemma}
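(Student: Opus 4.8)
The plan is to bound each of the two quantities directly using the definitions in \eqref{exact_information}--\eqref{inexact_information}, the triangle inequality, and Assumption \ref{assumption_disturbance}. First recall that $\varepsilon_j = r - (Gu_j + c)$ and $e_j = \varepsilon_j - d_j$, so that $J(u_j) = \tfrac12\|\varepsilon_j\|_2^2 \le \bar J$ and $\|d_j\|_2 \le \sqrt N\bar d$ from the discussion preceding the lemma.

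For \eqref{bound_cost_value}, I would write
\begin{small}
\begin{equation*}
|J(u_j) - \tilde J(u_j)| = \tfrac12\bigl|\|\varepsilon_j\|_2^2 - \|e_j\|_2^2\bigr|
= \tfrac12\bigl|\|\varepsilon_j\|_2^2 - \|\varepsilon_j - d_j\|_2^2\bigr|.
\end{equation*}
\end{small}
Expanding $\|\varepsilon_j - d_j\|_2^2 = \|\varepsilon_j\|_2^2 - 2\varepsilon_j^T d_j + \|d_j\|_2^2$ gives $|J(u_j)-\tilde J(u_j)| = \tfrac12|2\varepsilon_j^T d_j - \|d_j\|_2^2| \le |\varepsilon_j^T d_j| + \tfrac12\|d_j\|_2^2$. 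Then Cauchy--Schwarz yields $|\varepsilon_j^T d_j| \le \|\varepsilon_j\|_2\|d_j\|_2 \le \sqrt{2\bar J}\,\sqrt N\bar d$, and $\tfrac12\|d_j\|_2^2 \le \tfrac12 N\bar d^2$, so the two terms combine to exactly $\Delta_1 = \sqrt{2\bar J N}\,\bar d + \tfrac12 N\bar d^2$.

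For \eqref{bound_gradient}, I would compute
\begin{small}
\begin{equation*}
\nabla J(u_j) - \tilde\nabla J(u_j) = -G^T\varepsilon_j + \tilde G^T e_j
= -(G - \tilde G)^T\varepsilon_j - \tilde G^T(\varepsilon_j - e_j)
= (\tilde G - G)^T\varepsilon_j - \tilde G^T d_j,
\end{equation*}
\end{small}
where I have used $\varepsilon_j - e_j = d_j$. Taking norms and applying the triangle inequality together with the submultiplicative bound $\|M^T v\|_2 \le \|M\|_2\|v\|_2$ gives $\|\nabla J(u_j) - \tilde\nabla J(u_j)\|_2 \le \|\tilde G - G\|_2\|\varepsilon_j\|_2 + \|\tilde G\|_2\|d_j\|_2 \le \|\tilde G - G\|_2\sqrt{2\bar J} + \|\tilde G\|_2\sqrt N\bar d = \Delta_2$, using $\|\varepsilon_j\|_2 = \sqrt{2J(u_j)} \le \sqrt{2\bar J}$ and $\|d_j\|_2 \le \sqrt N\bar d$ once more.

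There is no genuine obstacle here; the argument is a routine application of the triangle inequality and Cauchy--Schwarz. The only point requiring a small amount of care is the algebraic rearrangement of $\nabla J(u_j) - \tilde\nabla J(u_j)$ so that the error splits cleanly into a ``model-mismatch'' term proportional to $\|\tilde G - G\|_2$ and a ``disturbance'' term proportional to $\bar d$; adding and subtracting $\tilde G^T\varepsilon_j$ (equivalently, grouping as above) makes this transparent. One should also note that $\bar J = \max_{u\in\mathcal Q} J(u)$ is finite because $\mathcal Q$ is compact and $J$ is continuous, which is what justifies using $\bar J$ as a uniform bound over all trials $j$.
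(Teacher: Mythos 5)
Your proof is correct and follows essentially the same route as the paper: the same expansion of the squared norms (equivalently, the inner-product identity), Cauchy--Schwarz with $\|\varepsilon_j\|_2\le\sqrt{2\bar J}$ and $\|d_j\|_2\le\sqrt N\bar d$, and the same splitting of the gradient error into the $(\tilde G-G)^T\varepsilon_j$ and $\tilde G^T d_j$ terms. No gaps.
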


\begin{proof}
Consider the function value in (\ref{exact_information}) and (\ref{inexact_information}), the difference is

\begin{small}
\begin{equation*}
\begin{split}
J(u)-\tilde{J}(u)
&=
\frac{1}{2}\|r-(Gu+c)\|_{2}^{2}
-
\frac{1}{2}\|r-(Gu+c+d)\|_{2}^{2}
\\
&=
\langle
r-(Gu+c),d
\rangle
-\frac{1}{2}\|d\|_{2}^{2}
\end{split}
\end{equation*}
\end{small}Since $\mathcal{Q}$ is a convex and compact set, and $J(u)$ is convex, there exists a finite value $\bar{J}$ such that $\bar{J}=\max\limits_{u\in\mathcal{Q}}J(u)$. Then according to the Cauchy–Schwarz inequality, we have

\begin{small}
\begin{equation*}
\begin{split}
|J(u)-\tilde{J}(u)|
&\leq
\begin{vmatrix}
\langle
r-(Gu+c),d
\rangle
\end{vmatrix}
+\frac{1}{2}N\bar{d}^{2}
\\
&\leq
\|r-(Gu+c)\|_{2}\|d\|_{2}
+\frac{1}{2}N\bar{d}^{2}
\\
&\leq
\sqrt{2\bar{J}N}\bar{d}+\frac{1}{2}N\bar{d}^{2}
\end{split}
\end{equation*}
\end{small}Then consider the difference of the gradient in (\ref{exact_information}) and (\ref{inexact_information}):

\begin{small}
\begin{equation*}
\begin{split}
\|\nabla J(u)-\tilde{\nabla}J(u)\|_{2}
&=
\begin{Vmatrix}
\begin{pmatrix}
\tilde{G}^{T}-G^{T}
\end{pmatrix}
\begin{pmatrix}
r-(Gu+c)
\end{pmatrix}
-\tilde{G}^{T}d
\end{Vmatrix}_{2}
\\
&\leq
\begin{Vmatrix}
\begin{pmatrix}
\tilde{G}^{T}-G^{T}
\end{pmatrix}
\begin{pmatrix}
r-(Gu+c)
\end{pmatrix}
\end{Vmatrix}_{2}
+
\begin{Vmatrix}
\tilde{G}^{T}d
\end{Vmatrix}_{2}
\\
&\leq
\begin{Vmatrix}
\tilde{G}^{T}-G^{T}
\end{Vmatrix}_{2}
\begin{Vmatrix}
r-(Gu+c)
\end{Vmatrix}_{2}
+
\begin{Vmatrix}
\tilde{G}^{T}d
\end{Vmatrix}_{2}
\\
&\leq
\begin{Vmatrix}
\tilde{G}-G
\end{Vmatrix}_{2}
\sqrt{2\bar{J}}
+
\begin{Vmatrix}
\tilde{G}
\end{Vmatrix}_{2}
\sqrt{N}\bar{d}
\end{split}
\end{equation*}
\end{small}In this way, the bounds in (\ref{bound_cost_value}) and (\ref{bound_gradient}) are obtained.
\end{proof}

\begin{remark}
After the system identification, the inexact representation $\tilde{G}$ is fixed, which implies that the bounds in Lemma \ref{lemma_exact_inexact_bound} are time-invariant. Moreover, $\Delta_{1}$ and $\Delta_{2}$ do not have to be known, they only appear in the convergence proof.
\end{remark}

Applying the inexact gradient in (\ref{original_projection_gradient}), we have the following iterative algorithm:

\begin{small}
\begin{subequations}\label{classical_gradient_ILC}
\begin{align}
u_{j+1}
&=
\arg\min\limits_{u\in\mathcal{Q}}
\{
\langle
\tilde{\nabla}J(u_{j}),u-u_{j}
\rangle
+
\frac{L}{2}\|u-u_{j}\|_{2}^{2}
\}
\\
\Rightarrow
u_{j+1}
&=
\Pi_{\mathcal{Q}}
\begin{pmatrix}
u_{j}
+
\frac{1}{L}\tilde{G}^{T}e_{j}
\end{pmatrix}
\end{align}
\end{subequations}
\end{small}in which we assume that the Lipschitz constant $L$ of $J(u_{j})$ is known for simplicity, the iteration can be started from a sufficient small step-size $\rho>0$ if $L$ is unknown. The ILC update law (\ref{classical_gradient_ILC}) can be summarized in Algorithm \ref{algorithm_ILC_classical} and the convergence is shown in Theorem \ref{theorem_convergence_classical}.

\begin{algorithm}
\caption{Classical data-driven iterative learning control}
\label{algorithm_ILC_classical}
\begin{algorithmic}[1]
\REQUIRE~~Data-driven representation $\tilde{G}$, Lipschitz constant $L$, reference trajectory $r$, initial trajectory $u_{0}$ and maximal iteration number $M$.\\
\ENSURE~~ILC sequence $\{u_{j}\}$.\\
\FOR {$j=0:M$}
\STATE Measure the perturbed system output $\tilde{y}_{j}$.
\STATE Compute the tracking error $e_{j}=r-\tilde{y}_{j}$.
\STATE Compute the inexact gradient $\tilde{\nabla} J(u_{j})=-\tilde{G}^{T}e_{j}$.
\STATE Compute \\ $u_{j+1}=\arg\min\limits_{u\in\mathcal{Q}}\{\langle\tilde{\nabla}J(u_{j}),u-u_{j}\rangle+\frac{L}{2}\|u-u_{j}\|_{2}^{2}\}$
\STATE Apply $u_{j+1}$ to the system.
\ENDFOR
\end{algorithmic}
\end{algorithm}

\begin{theorem}\label{theorem_convergence_classical}
Consider the classical data-driven ILC in Algorithm \ref{algorithm_ILC_classical} with Assumptions \ref{assumption_invertible} and \ref{assumption_disturbance}, the following convergence result holds for ILC sequence $\{u_{j}\}$:

\begin{small}
\begin{equation}\label{convergence_rate_classical}
J(u_{j})
-
J(u^{*})
\leq
\frac{L\|u^{*}-u_{0}\|_{2}^{2}}{4j}
+
\delta
\end{equation}
\end{small}where $\delta=2\Delta_{1}+2\Delta_{2}D$ and $D=\sup\limits_{a,b\in\mathcal{Q}}\|a-b\|_{2}$.

\end{theorem}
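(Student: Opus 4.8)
The plan is to treat Algorithm~\ref{algorithm_ILC_classical} as an inexact projected gradient method on the convex $L$-smooth function $J$ over the compact convex set $\mathcal{Q}$, and to adapt the classical $O(1/j)$ convergence estimate for projected gradient descent, carefully tracking how the two error sources $\Delta_1$ and $\Delta_2$ from Lemma~\ref{lemma_exact_inexact_bound} propagate through the argument. The starting point is the descent-lemma consequence of $L$-smoothness applied at the true gradient: since $u_{j+1}$ minimizes the model $\langle\tilde\nabla J(u_j),u-u_j\rangle+\tfrac{L}{2}\|u-u_j\|_2^2$ over $\mathcal{Q}$, its optimality condition gives, for every $u\in\mathcal{Q}$,
\begin{small}
\begin{equation*}
\langle \tilde\nabla J(u_j)+L(u_{j+1}-u_j),\,u-u_{j+1}\rangle\ge 0 .
\end{equation*}
\end{small}
Combining this with $J(u_{j+1})\le J(u_j)+\langle\nabla J(u_j),u_{j+1}-u_j\rangle+\tfrac{L}{2}\|u_{j+1}-u_j\|_2^2$ and replacing $\nabla J(u_j)$ by $\tilde\nabla J(u_j)+(\nabla J(u_j)-\tilde\nabla J(u_j))$, the inexact-gradient error gets multiplied by $\|u_{j+1}-u_j\|_2\le D$, so by Lemma~\ref{lemma_exact_inexact_bound} it contributes at most $\Delta_2 D$ per step; the function-value mismatch $\Delta_1$ enters when one passes between $J$ and $\tilde J$.

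Next I would derive the standard telescoping inequality. Choosing $u=u^{*}$ in the optimality condition and manipulating as in the textbook proof of projected gradient descent yields a per-iteration bound of the form
\begin{small}
\begin{equation*}
J(u_{j+1})-J(u^{*})\le \frac{L}{2}\bigl(\|u_j-u^{*}\|_2^2-\|u_{j+1}-u^{*}\|_2^2\bigr)+c\,,
\end{equation*}
\end{small}
where the constant $c$ collects the inexactness terms and is bounded by something like $\Delta_1+\Delta_2 D$ (up to the numerical constant that eventually produces $\delta=2\Delta_1+2\Delta_2 D$). One also needs monotone-type control, $J(u_{j+1})-J(u^{*})\le J(u_j)-J(u^{*})+c'$, again using $L$-smoothness plus the error bounds. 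Summing the telescoping inequality from $0$ to $j-1$, the distance terms collapse to $\tfrac{L}{2}\|u_0-u^{*}\|_2^2$, and dividing by $j$ together with the near-monotonicity gives $J(u_j)-J(u^{*})\le \tfrac{L\|u^{*}-u_0\|_2^2}{2j}+c''$; a slightly sharper accounting (or averaging argument) tightens the leading constant to $\tfrac{L\|u^{*}-u_0\|_2^2}{4j}$ as claimed, and bundling all residual constants yields the additive floor $\delta=2\Delta_1+2\Delta_2 D$.

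The main obstacle I anticipate is the bookkeeping of the inexactness terms: one must be scrupulous about which quantity — $J$ versus $\tilde J$, $\nabla J$ versus $\tilde\nabla J$ — appears at each line, since the algorithm is driven entirely by tilded quantities while the convergence claim is stated for the exact $J$, and each swap costs either $\Delta_1$ or a $D$-weighted multiple of $\Delta_2$. A secondary technical point is justifying that the telescoping sum genuinely cancels despite the perturbations (this is where the near-monotonicity estimate is needed, rather than exact monotone decrease), and that the constant can be driven down to $1/(4j)$ rather than $1/(2j)$; if the sharper constant proves delicate, the safe fallback is to absorb the discrepancy into $\delta$, but I expect the averaging/telescoping argument to deliver the stated rate directly.
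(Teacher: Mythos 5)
Your plan---redoing the projected--gradient analysis from scratch while tracking $\Delta_{1}$ and $\Delta_{2}$---is a genuinely different route from the paper's. The paper's proof is essentially a two-line verification plus a citation: it packages the pair $\bigl(\tilde{J}(v)-\Delta_{1}-\Delta_{2}D,\ \tilde{\nabla}J(v)\bigr)$ as a $(\delta,L)$-inexact oracle in the sense of \cite{Devolder_2014}, i.e.\ it checks the global linear lower bound and the quadratic upper bound with slack $\delta=2\Delta_{1}+2\Delta_{2}D$ (inequality (\ref{inexact_oracle})), and then invokes Theorem~2 of that reference, which \emph{is} the statement (\ref{convergence_rate_classical}) for the primal gradient method driven by such an oracle. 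This also explains where $\Delta_{1}$ comes from: it is the price of shifting the oracle value $\tilde{J}(v)$ downward so that the lower and upper bounds hold simultaneously, each direction costing $\Delta_{1}+\Delta_{2}D$. In your sketch $\Delta_{1}$ has no clear entry point---the iteration you analyze only ever uses the inexact gradient, never $\tilde{J}$---so ``bundling all residual constants'' into exactly $2\Delta_{1}+2\Delta_{2}D$ is asserted rather than derived.

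Beyond that bookkeeping issue, two steps of your argument would fail as written. First, the passage from the telescoped bound $\sum_{i=1}^{j}\bigl(J(u_{i})-J(u^{*})\bigr)\leq\frac{L}{2}\|u_{0}-u^{*}\|_{2}^{2}+jc$ to a bound on the \emph{last} iterate via near-monotonicity is lossy in the inexact setting: from $J(u_{i+1})\leq J(u_{i})+c'$ one only gets $J(u_{j})\leq J(u_{i})+(j-i)c'$, hence $J(u_{j})\leq\frac{1}{j}\sum_{i=1}^{j}J(u_{i})+\frac{j-1}{2}c'$, which reintroduces an $\mathcal{O}(j\delta)$ accumulation term and destroys the constant floor $\delta$ claimed in (\ref{convergence_rate_classical}). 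Avoiding accumulation requires either stating the bound for the averaged or best iterate, or the more careful weighting used in \cite{Devolder_2014}; near-monotonicity alone does not suffice. Second, the leading constant: your derivation yields $\frac{L\|u^{*}-u_{0}\|_{2}^{2}}{2j}$, and the proposed fallback of absorbing the discrepancy into $\delta$ is not available, since $\delta$ is pinned to $2\Delta_{1}+2\Delta_{2}D$ in the theorem statement and is independent of $j$; the factor $\frac{1}{4}$ needs an actual argument (it is part of what the cited Theorem~2 delivers), not ``sharper accounting.''
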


\begin{proof}
The proof follows the result in \cite{Devolder_2014}. First, we construct the inexact oracle $(J_{\delta,L}(v),\nabla J_{\delta,L}(v))$ from the inexact information $(\tilde{J}(v),\tilde{\nabla}J(v))$. Specifically, as for the global lower bound, we have

\begin{small}
\begin{equation*}
\begin{split}
&J(u)
\geq
J(v)+\langle\nabla J(v),u-v\rangle \\
&\overset{(\ref{bound_cost_value}),(\ref{bound_gradient})}
{\geq}
\tilde{J}(v)-\Delta_{1}
+\langle\tilde{\nabla} J(v),u-v\rangle-\Delta_{2}D
\end{split}
\end{equation*}
\end{small}then define the pair

\begin{small}
\begin{equation*}
\begin{cases}
  J_{\delta,L}(v)=\tilde{J}(v)-\Delta_{1}-\Delta_{2}D & \\
  \nabla J_{\delta,L}(v)=\tilde{\nabla}J(v) &
\end{cases}
\end{equation*}
\end{small}as the inexact oracle. On the other hand, the quadratic upper bound under the inexact oracle is

\begin{small}
\begin{equation*}
\begin{split}
&J(u)
\leq
J(v)+\langle\nabla J(v),u-v\rangle+\frac{L}{2}\|u-v\|_{2}^{2} \\
&\overset{(\ref{bound_cost_value}),(\ref{bound_gradient})}{\leq}
\tilde{J}(v)+\Delta_{1}
+
\langle\tilde{\nabla}J(v),u-v\rangle
+
\Delta_{2}D+\frac{L}{2}\|u-v\|_{2}^{2} \\
&=
J_{\delta,L}(v)
+
\langle
\nabla J_{\delta,L}(v)
,u-v\rangle
+
\frac{L}{2}\|u-v\|_{2}^{2}
+
2\Delta_{1}+2\Delta_{2}D
\end{split}
\end{equation*}
\end{small}and define $\delta=2\Delta_{1}+2\Delta_{2}D$ as the oracle accuracy. In this way, it is easy to get that the inexact oracle satisfies

\begin{small}
\begin{equation}\label{inexact_oracle}
0\leq
J(u)-
\begin{pmatrix}
J_{\delta,L}(v)+\langle\nabla J_{\delta,L}(v),u-v\rangle
\end{pmatrix}
\leq
\frac{L}{2}\|u-v\|_{2}^{2}+\delta,
\ \
\forall u,v\in\mathcal{Q}
\end{equation}
\end{small}Then according to Theorem 2 in \cite{Devolder_2014}, we have (\ref{convergence_rate_classical}).
\end{proof}

Moreover, we use the Nesterov accelerated gradient method and the inexact oracle to solve the ILC problem, which is summarized in Algorithm \ref{algorithm_ILC} and the convergence result is given in Theorem \ref{theorem_convergence}.

\begin{algorithm}
\caption{Fast data-driven iterative learning control}
\label{algorithm_ILC}
\begin{algorithmic}[1]
\REQUIRE~~Data-driven representation $\tilde{G}$, Lipschitz constant $L$, reference trajectory $r$, initial trajectory $u_{0}$ and maximal iteration number $M$.\\
\ENSURE~~ILC sequence $\{u_{j}\}$.\\
\FOR {$j=0:M$}
\STATE Measure the perturbed system output $\tilde{y}_{j}$.
\STATE Compute the tracking error $e_{j}=r-\tilde{y}_{j}$.
\STATE Compute the inexact gradient $\tilde{\nabla} J(u_{j})=-\tilde{G}^{T}e_{j}$.
\STATE Compute \\ $\mu_{j}=\arg\min\limits_{\mu\in\mathcal{Q}}\{\langle\tilde{\nabla} J(u_{j}),\mu-u_{j}\rangle+\frac{L}{2}\|\mu-u_{j}\|_{2}^{2}\}$
\STATE Compute \\
$
\nu_{j}=\arg\min\limits_{\nu\in
\mathcal{Q}}\{
\frac{1}{2}\|\nu-u_{0}\|_{2}^{2}
+\sum_{i=0}^{j}\frac{i+1}{2L}\langle\tilde{\nabla}J(u_{i}),\nu-u_{i}\rangle\}
$\\
\STATE Compute $u_{j+1}=\frac{2}{j+3}\nu_{j}+\frac{j+1}{j+3}\mu_{j}$
\STATE Apply $u_{j+1}$ to the system.
\ENDFOR
\end{algorithmic}
\end{algorithm}

\begin{theorem}\label{theorem_convergence}
Consider the fast data-driven ILC in Algorithm \ref{algorithm_ILC} with Assumptions \ref{assumption_invertible} and \ref{assumption_disturbance}, the following convergence result holds for ILC sequence $\{u_{j}\}$:

\begin{small}
\begin{equation}\label{convergence_rate}
J(u_{j})
-
J(u^{*})
\leq
\frac{2L\|u^{*}-u_{0}\|_{2}^{2}}{(j+1)(j+2)}
+
\frac{j+3}{3}\delta
\end{equation}
\end{small}where $\delta$ and $D$ are same as in Theorem \ref{theorem_convergence_classical}.

\end{theorem}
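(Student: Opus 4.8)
The plan is to follow the same route as the proof of Theorem \ref{theorem_convergence_classical}, but now invoking the convergence guarantee of the \emph{inexact fast gradient method} from \cite{Devolder_2014} rather than that of the classical one. First I would reuse verbatim the $(\delta,L)$-inexact oracle built in the proof of Theorem \ref{theorem_convergence_classical}, namely
\[
\begin{cases} J_{\delta,L}(v)=\tilde{J}(v)-\Delta_{1}-\Delta_{2}D \\ \nabla J_{\delta,L}(v)=\tilde{\nabla}J(v) \end{cases}
\]
which satisfies (\ref{inexact_oracle}) on $\mathcal{Q}$ with $\delta=2\Delta_{1}+2\Delta_{2}D$. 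Since Algorithm \ref{algorithm_ILC} accesses $J$ only through the pair $(\tilde{J},\tilde{\nabla}J)$, it is an instance of a first-order method run on this inexact oracle; note also that the constant offset $-\Delta_{1}-\Delta_{2}D$ in $J_{\delta,L}$ never affects any of the argmin steps, so only $\nabla J_{\delta,L}=\tilde{\nabla}J$ is actually used.

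Next I would identify Algorithm \ref{algorithm_ILC} with Nesterov's estimating-sequence form of the fast gradient method under the weight choice $\alpha_{i}=\frac{i+1}{2L}$, for which $A_{j}:=\sum_{i=0}^{j}\alpha_{i}=\frac{(j+1)(j+2)}{4L}$ and the coupling coefficient is $\tau_{j}=\alpha_{j+1}/A_{j+1}=\frac{2}{j+3}$, so that $1-\tau_{j}=\frac{j+1}{j+3}$. Concretely: line 5 is the inexact gradient step, which reduces to $\mu_{j}=\Pi_{\mathcal{Q}}(u_{j}+\frac{1}{L}\tilde{G}^{T}e_{j})$ exactly as in (\ref{classical_gradient_ILC}); line 6 is the minimization of the estimating function $\frac{1}{2}\|\nu-u_{0}\|_{2}^{2}+\sum_{i=0}^{j}\alpha_{i}\langle\nabla J_{\delta,L}(u_{i}),\nu-u_{i}\rangle$; and line 7 is the convex combination $u_{j+1}=\tau_{j}\nu_{j}+(1-\tau_{j})\mu_{j}$. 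Hence Algorithm \ref{algorithm_ILC} is precisely the inexact fast gradient method of \cite{Devolder_2014} for this parameter choice, and its convergence theorem applies.

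The last step is then purely algebraic. The inexact fast gradient bound delivers the ``exact'' rate plus an accumulated-error term of the form
\[
J(u_{j})-J(u^{*})\leq\frac{\|u^{*}-u_{0}\|_{2}^{2}}{2A_{j}}+\frac{\delta}{A_{j}}\sum_{i=0}^{j}A_{i}.
\]
Substituting $A_{j}=\frac{(j+1)(j+2)}{4L}$ turns the first term into $\frac{2L\|u^{*}-u_{0}\|_{2}^{2}}{(j+1)(j+2)}$, and using the identity $\sum_{i=0}^{j}(i+1)(i+2)=\frac{(j+1)(j+2)(j+3)}{3}$ gives $\frac{1}{A_{j}}\sum_{i=0}^{j}A_{i}=\frac{j+3}{3}$, so the second term becomes $\frac{j+3}{3}\delta$; this is exactly (\ref{convergence_rate}).

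I expect the main obstacle to be the bookkeeping in the second step: one must check carefully that the explicitly-indexed updates in lines 5--7 of Algorithm \ref{algorithm_ILC} coincide with the general estimating-sequence recursion of \cite{Devolder_2014} under the stated $\alpha_{i}$ --- in particular the mutual consistency of the weights $\frac{i+1}{2L}$ inside line 6 and the combination weights $\frac{2}{j+3},\frac{j+1}{j+3}$ in line 7, i.e.\ that $\tau_{j}=\alpha_{j+1}/A_{j+1}$ and that the index convention in the accumulated error matches --- so that the cited convergence theorem can be applied without modification. Once that identification is pinned down, the closed form of $A_{j}$ and the summation identity make the remainder routine.
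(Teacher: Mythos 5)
Your proposal is correct and follows essentially the same route as the paper: both construct the same $(\delta,L)$-inexact oracle from Lemma \ref{lemma_exact_inexact_bound} as in Theorem \ref{theorem_convergence_classical}, both identify Algorithm \ref{algorithm_ILC} with the inexact fast gradient method of \cite{Devolder_2014} (Theorem 4 there) with weights $\frac{i+1}{2L}$, and both finish with the same summation identity giving the $\frac{j+3}{3}\delta$ term. The only cosmetic difference is that the paper writes out the intermediate estimating-sequence bound and explicitly bounds the minimum by evaluating at $\nu=u^{*}$ via (\ref{inexact_oracle}), whereas you cite the resulting bound directly.
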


\begin{proof}
Similar to the proof in Theorem \ref{theorem_convergence_classical}, according to Theorem 4 in \cite{Devolder_2014}, we have

\begin{small}
\begin{equation*}
\begin{split}
\frac{(j+1)(j+2)}{4L}J(u_{j})
&\leq
\min\limits_{\nu\in\mathcal{Q}}
\begin{Bmatrix}
\frac{1}{2}\|\nu-u_{0}\|_{2}^{2}
+
\sum_{i=0}^{j}\frac{i+1}{2L}
\begin{pmatrix}
J_{\delta,L}(u_{i})
+
\langle
\nabla J_{\delta,L}(u_{i}),\nu-u_{i}
\rangle
\end{pmatrix}
\end{Bmatrix}\\
&+
\delta\sum_{i=0}^{j}\frac{(i+1)(i+2)}{4L}
\end{split}
\end{equation*}
\end{small}The minimization in the right hand side follows the inequality:

\begin{small}
\begin{equation*}
\begin{split}
&\min\limits_{\nu\in\mathcal{Q}}
\begin{Bmatrix}
\frac{1}{2}\|\nu-u_{0}\|_{2}^{2}
+
\sum_{i=0}^{j}\frac{i+1}{2L}
\begin{pmatrix}
J_{\delta,L}(u_{i})
+
\langle
\nabla J_{\delta,L}(u_{i}),\nu-u_{i}
\rangle
\end{pmatrix}
\end{Bmatrix}\\
&\leq
\frac{1}{2}\|u^{*}-u_{0}\|_{2}^{2}
+
\sum_{i=0}^{j}\frac{i+1}{2L}
\begin{pmatrix}
J_{\delta,L}(u_{i})
+
\langle
\nabla J_{\delta,L}(u_{i}),u^{*}-u_{i}
\rangle
\end{pmatrix}\\
&
\overset{(\ref{inexact_oracle})}
{\leq}
\frac{1}{2}\|u^{*}-u_{0}\|_{2}^{2}
+
\sum_{i=0}^{j}\frac{i+1}{2L}
J(u^{*})
\\
&
=
\frac{1}{2}\|u^{*}-u_{0}\|_{2}^{2}
+
\frac{(j+1)(j+2)}{4L}J(u^{*})
\end{split}
\end{equation*}
\end{small}Then one obtains the result:

\begin{small}
\begin{equation*}
\frac{(j+1)(j+2)}{4L}J(u_{j})
\leq
\frac{1}{2}\|u^{*}-u_{0}\|_{2}^{2}
+
\frac{(j+1)(j+2)}{4L}J(u^{*})
+
\delta\sum_{i=0}^{j}\frac{(i+1)(i+2)}{4L}
\end{equation*}
\end{small}which leads to (\ref{convergence_rate}).
\end{proof}

The convergence result (\ref{convergence_rate_classical}) shows that there is no error accumulation and the upper bound of $J(u_{j})-J(u^{*})$ converges to $\delta$ with the rate $\mathcal{O}(1/j)$. On the other hand, the first term in (\ref{convergence_rate}) decreases with $\mathcal{O}(1/j^{2})$, but the second term increases with $\mathcal{O}(j)$, which implies that fast ILC is asymptotically divergent. However, the divergent term can only dominate the convergence rate after a few iterations, therefore, fast ILC is superior to classical ILC at the early iteration stage.

Based on this observation, a natural way to improve the convergence performance is to combine fast and classical ILC, which is called hybrid ILC in this paper. Specifically, fast ILC is used first in order to decrease the tracking error rapidly, and then the algorithm is switched to classical ILC for a better asymptotic performance. Assume the algorithm is switched at the iteration $s$, it is easy to see that hybrid ILC has the same convergence result as fast ILC when $j\leq s$, and if $j>s$, the classical ILC convergence result initialized from $u_{s}$ is inherited. There are two merits of applying this hybrid method. First, classical ILC can start from $u_{s}$ early. Classical ILC usually takes more iterations to calculate the same input $u_{s}$ compared to fast ILC. Second, the divergence effect of fast ILC is removed. The input $u_{s}$ can be improved further by classical ILC, rather than oscillating due to the error accumulation.

The switching iteration $s$ is determined by some switching conditions, which is detected at each iteration. In this paper, we propose an empirical switching condition, which is simple but efficient in experiments. Particularly, we calculate the average tracking error norm $\bar{e}_{j}$ at the $j$-th iteration within the time window $[j-W+1,j]$, in which the window length $W$ is determined by the user, and then compare with the previous average tracking error norm $\bar{e}_{j-1}$. If $\bar{e}_{j}>\bar{e}_{j-1}$, then the algorithm switches to classical ILC. The proposed hybrid ILC method is summarized in Algorithm \ref{algorithm_hybrid_ILC}. In line $13$, fast ILC is used to decrease the tracking error rapidly at the early iteration stage. Lines $5-11$ summarize the detection of switching condition. Line $15$ shows that classical ILC is used for a better asymptotic performance after the switching.

The following theorem provides the theoretical guarantee of hybrid ILC to achieve a given accuracy level $\tau$:

\begin{algorithm}[t]
\caption{Hybrid data-driven iterative learning control}
\label{algorithm_hybrid_ILC}
\begin{algorithmic}[1]
\REQUIRE~~Data-driven representation $\tilde{G}$, Lipschitz constant $L$, reference trajectory $r$, initial trajectory $u_{0}$, maximal iteration number $M$, moving horizon length $W$ and initial switching flag $SF=0$.\\
\ENSURE~~ILC sequence $\{u_{j}\}$.\\
\FOR {$j=0:M$}
\STATE Measure the perturbed system output $\tilde{y}_{j}$.
\STATE Compute the tracking error $e_{j}=r-\tilde{y}_{j}$.
\STATE Compute the inexact gradient $\tilde{\nabla} J(u_{j})=-\tilde{G}^{T}e_{j}$.
\IF{$j\geq W$ and $SF=0$}
\STATE Compute the averaged tracking error within the current time window $[j-W+1,j]$: \\ $\bar{e}_{j}=\frac{1}{W}\sum_{i=j-W+1}^{j}\|e_{i}\|_{2}$
\STATE Compute the averaged tracking error within the  previous time window $[j-W,j-1]$: \\ $\bar{e}_{j-1}=\frac{1}{W}\sum_{i=j-W}^{j-1}\|e_{i}\|_{2}$
\IF{$\bar{e}_{j}>\bar{e}_{j-1}$}
\STATE Set $SF=1$.
\ENDIF
\ENDIF
\IF{$SF=0$}
\STATE Invoke steps $5-7$ in Algorithm \ref{algorithm_ILC} (Fast ILC).
\ELSIF{$SF=1$}
\STATE Invoke step $5$ in Algorithm \ref{algorithm_ILC_classical} (Classical ILC).
\ENDIF
\STATE Apply $u_{j+1}$ to the system.
\ENDFOR
\end{algorithmic}
\end{algorithm}

\begin{theorem}\label{theorem_iteration_bound}
Consider the hybrid data-driven ILC in Algorithm \ref{algorithm_hybrid_ILC} with Assumptions \ref{assumption_invertible} and \ref{assumption_disturbance}. If the algorithm is switched at the iteration $s$ and the accuracy level is given by $\tau>\delta$, then the ILC sequence $\{u_{j}\}$ satisfies $J(u_{j})-J(u^{*})\leq\tau$ for all $j\geq \lceil j^{*}\rceil$, and $j^{*}$ is formulated by

\begin{small}
\begin{equation}\label{iteration_bound}
j^{*}=
\frac{6L^{2}\|u^{*}-u_{0}\|_{2}^{2}+\delta L(s+1)(s+2)(s+3)}
{6\xi(s+1)(s+2)(\tau-\delta)}
+s
\end{equation}
\end{small}where $\xi=\lambda_{\min}(G^{T}G)$ and $\lambda_{\min}(\cdot)$ is the minimum eigenvalue.

\end{theorem}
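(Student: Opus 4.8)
The plan is to chain together the two convergence results already established for the fast and classical phases, exploit the strong convexity-type lower bound afforded by $\xi = \lambda_{\min}(G^{T}G)$ to convert a function-value gap into a control-error bound at the switching point, and then solve for the iteration index at which the classical-phase tail falls below $\tau - \delta$. First I would observe that for $j \le s$ the hybrid algorithm coincides with fast ILC, so Theorem \ref{theorem_convergence} gives $J(u_{s}) - J(u^{*}) \le \frac{2L\|u^{*}-u_{0}\|_{2}^{2}}{(s+1)(s+2)} + \frac{s+3}{3}\delta$. This is the quantity that plays the role of the ``initial gap'' for the classical phase.

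Next, for $j > s$ the algorithm runs classical ILC initialized at $u_{s}$, so Theorem \ref{theorem_convergence_classical} (re-indexed so that the iteration counter starts at $u_{s}$) yields
\begin{small}
\begin{equation*}
J(u_{j}) - J(u^{*}) \le \frac{L\|u^{*}-u_{s}\|_{2}^{2}}{4(j-s)} + \delta .
\end{equation*}
\end{small}
The key step is to bound $\|u^{*}-u_{s}\|_{2}^{2}$ in terms of the already-controlled function-value gap at $s$. Since $J(u) = \tfrac12\|r-(Gu+c)\|_2^2$ is a quadratic with Hessian $G^{T}G$, and $u^{*}$ is the minimizer over $\mathcal{Q}$, the first-order optimality condition $\langle \nabla J(u^{*}), u_{s}-u^{*}\rangle \ge 0$ combined with the exact quadratic expansion gives $J(u_{s}) - J(u^{*}) \ge \tfrac12\langle G^{T}G(u_{s}-u^{*}), u_{s}-u^{*}\rangle \ge \tfrac{\xi}{2}\|u^{*}-u_{s}\|_{2}^{2}$. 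Hence $\|u^{*}-u_{s}\|_{2}^{2} \le \tfrac{2}{\xi}\big(J(u_{s}) - J(u^{*})\big)$, and substituting the fast-ILC bound for $J(u_{s})-J(u^{*})$ gives
\begin{small}
\begin{equation*}
\|u^{*}-u_{s}\|_{2}^{2} \le \frac{2}{\xi}\left(\frac{2L\|u^{*}-u_{0}\|_{2}^{2}}{(s+1)(s+2)} + \frac{s+3}{3}\delta\right).
\end{equation*}
\end{small}

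Finally I would plug this into the classical-phase bound and demand that the transient term not exceed $\tau - \delta$:
\begin{small}
\begin{equation*}
\frac{L\|u^{*}-u_{s}\|_{2}^{2}}{4(j-s)} \le \tau - \delta
\quad\Longleftrightarrow\quad
j - s \ge \frac{L\|u^{*}-u_{s}\|_{2}^{2}}{4(\tau-\delta)} .
\end{equation*}
\end{small}
Substituting the bound on $\|u^{*}-u_{s}\|_{2}^{2}$ and simplifying the constants, $\frac{L}{4(\tau-\delta)}\cdot\frac{2}{\xi}\cdot\frac{2L\|u^{*}-u_0\|_2^2}{(s+1)(s+2)} = \frac{L^2\|u^{*}-u_0\|_2^2}{\xi(s+1)(s+2)(\tau-\delta)}$ and $\frac{L}{4(\tau-\delta)}\cdot\frac{2}{\xi}\cdot\frac{(s+3)\delta}{3} = \frac{\delta L(s+3)}{6\xi(\tau-\delta)}$; writing the second term over the common denominator $6\xi(s+1)(s+2)(\tau-\delta)$ turns its numerator into $\delta L (s+1)(s+2)(s+3)$ and the first into $6L^2\|u^{*}-u_0\|_2^2$, which after adding $s$ reproduces \eqref{iteration_bound} exactly. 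Taking $j \ge \lceil j^{*}\rceil$ then guarantees $J(u_j)-J(u^{*}) \le (\tau-\delta) + \delta = \tau$. The main obstacle is the error-conversion step: one must be careful that the quadratic lower bound $J(u_{s})-J(u^{*})\ge \tfrac{\xi}{2}\|u^{*}-u_{s}\|_2^2$ is legitimate in the constrained setting — it relies on the exact (disturbance-free) quadratic structure of $J$ and on $u^{*}$ being the true constrained minimizer, using the variational inequality to discard the linear term rather than naive strong convexity, and on $\xi>0$, which holds since $G$ is invertible under Assumption \ref{assumption_invertible}.
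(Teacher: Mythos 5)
Your proposal is correct and follows essentially the same route as the paper's proof: chain Theorem \ref{theorem_convergence} at the switching iteration with the re-indexed Theorem \ref{theorem_convergence_classical} for the tail, convert the function-value gap at $s$ into a bound on $\|u^{*}-u_{s}\|_{2}^{2}$ via $\xi=\lambda_{\min}(G^{T}G)$, and solve for the iteration count at which the transient term drops below $\tau-\delta$. The one place you deviate is in fact a refinement: the paper discards the linear term by asserting $\nabla J(u^{*})=0$, which need not hold for a constrained minimizer over $\mathcal{Q}$, whereas your use of the variational inequality $\langle\nabla J(u^{*}),u_{s}-u^{*}\rangle\geq0$ together with the exact quadratic expansion of $J$ handles the constrained case correctly.
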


\begin{proof}
According to Assumption \ref{assumption_invertible} and the lower triangular structure of $G$, we have $G^{T}G\succ0$, hence the exact cost function $J(u)=\frac{1}{2}\|r-(Gu+c)\|_{2}^{2}$ is strongly convex with the parameter $\xi$. Thus, $J(u)$ has a quadratic lower bound:

\begin{small}
\begin{equation*}
J(u)
\geq
J(v)+\nabla J(v)^{T}(u-v)+\frac{\xi}{2}\|u-v\|_{2}^{2},\
\forall u,v\in\mathcal{Q}
\end{equation*}
\end{small}Let $u=u_{j}$ and $v=u^{*}$, and notice that $\nabla J(u^{*})=0$, then we have a lower bound of $J(u_{j})-J(u^{*})$:

\begin{small}
\begin{equation*}
J(u_{j})-J(u^{*})
\geq
\frac{\xi}{2}\|u_{j}-u^{*}\|_{2}^{2}
\end{equation*}
\end{small}Since the algorithm is switched at the iteration $s$, (\ref{convergence_rate}) leads to

\begin{small}
\begin{equation*}
\frac{\xi}{2}\|u_{s}-u^{*}\|_{2}^{2}
\leq
J(u_{s})
-
J(u^{*})
\leq
\frac{2L\|u^{*}-u_{0}\|_{2}^{2}}{(s+1)(s+2)}
+
\frac{s+3}{3}\delta
\end{equation*}
\end{small}On the other hand, (\ref{convergence_rate_classical}) is inherited for all $j>s$:

\begin{small}
\begin{equation*}
J(u_{j})
-
J(u^{*})
\leq
\frac{L\|u^{*}-u_{s}\|_{2}^{2}}{4(j-s)}
+
\delta
\end{equation*}
\end{small}which leads to

\begin{small}
\begin{equation*}
J(u_{j})
-
J(u^{*})
\leq
\frac{6L^{2}\|u^{*}-u_{0}\|_{2}^{2}+\delta L(s+1)(s+2)(s+3)}
{6\xi(s+1)(s+2)(j-s)}
+
\delta
\end{equation*}
\end{small}Let the right hand side of the above inequality be less than or equal to $\tau$, then we have (\ref{iteration_bound}).
\end{proof}

\begin{remark}
Theorem \ref{theorem_iteration_bound} assumes that the accuracy level $\tau$ is achieved after the algorithm switching, i.e., $j^{*}>s$, this is practical since convergence rates (\ref{convergence_rate_classical}) and (\ref{convergence_rate}) imply that classical ILC has a better asymptotic performance, therefore, a higher accuracy can be achieved. Moreover, (\ref{convergence_rate_classical}) shows that the upper bound of the cost function accuracy decreases with the iteration number $j$ and converges to $\delta$ asymptotically, therefore, the given accuracy level $\tau$ has to be larger than $\delta$.
\end{remark}

\begin{remark}
The switching condition of hybrid ILC is determined by the user, Algorithm \ref{algorithm_hybrid_ILC} only provides a simple one. The essence is to observe the tracking error $e_{j}$, if $e_{j}$ becomes oscillating or starts to diverge, for example around step $50$ in Figure \ref{fig_tracking_input_error}(a), then one should switch to classical ILC in order to reduce the tracking error further.
\end{remark}

\section{Illustrative examples}\label{sec5}

\subsection{Toy example}

In this part, the performance of different methods are evaluated on the system:

\begin{small}
\begin{equation*}
H(z)=
\frac{0.7836z^{3}+0.7732z^{2}+0.1936z+0.009937}
{z^{4}+1.778z^{3}+0.9869z^{2}+0.2007z+0.0205}
\end{equation*}
\end{small}which is controllable, minimum-phase and of relative degree $1$. Each element of the invariant reference signal and initial state $x_{0}$ are randomly generated from the interval $[-10,10]$. The box-constraint set $\mathcal{Q}$ for each input element is selected from $[-20,20]$, with a guarantee that the optimal input trajectory $u^{*}=G^{-1}(r-c)$ is included ($G$ and $c$ are defined in (\ref{convolutional_matrix})), which leads to $J(u^{*})=0$. The proposed method has two parts: data-driven representation (offline part) and iterative learning control (online part). The offline data length is $T=1000$ and the initial trajectory length $T_{\mathrm{ini}}=4$, which is equal to the system lag. The trial length is $N=20$, the maximal iteration number is $M=500$ and the time window length is $W=20$ for the online part.

Three kinds of output disturbances are considered: (a), the uniformly distributed disturbance with the bound $\bar{d}$. (b), the sine disturbance with the bound $\bar{d}$, $100$rad/s angular frequency and random phases. (c), the white Gaussian disturbance with $0$ mean and $\sigma$ standard deviation. The disturbance data length is $T$ for the offline part and $N$ for the online part. The bound is $\bar{d}=3$ for the first two disturbances, and the standard deviation is $\sigma=1$ for the last disturbance. Although the white Gaussian disturbance can be unbounded, around $99.73\%$ of samples belong to the interval $[-3\sigma,3\sigma]$ according to the three-$\sigma$ rule.

In order to evaluate the offline performance, we discuss the relative error generated by the data-driven representations $\tilde{G}_{\mathrm{DD}}$ and $\tilde{G}_{\mathrm{SI}}$, in which DD and SI denote the non-parametric direct data-driven method obtained by the fundamental lemma and the parametric representation obtained by subspace identification, respectively. In particular, a feasible input trajectory $u$ is randomly generated after the offline part, and then used to calculate the relative error $\|y-\tilde{y}\|_{2}/\|y\|_{2}$ by $y=Gu$ (exact output), and $\tilde{y}_{\mathrm{DD}}=\tilde{G}_{\mathrm{DD}}u$ or $\tilde{y}_{\mathrm{SI}}=\tilde{G}_{\mathrm{SI}}u$, in which the initial response is assumed to be $0$. As for the online performance, we use the exact tracking error, i.e., $\varepsilon_{j}$ in (\ref{exact_information}), and the input error $u_{j}-u^{*}$ as the indicators.

\begin{figure}[h]
\centering
\includegraphics[scale=0.5]{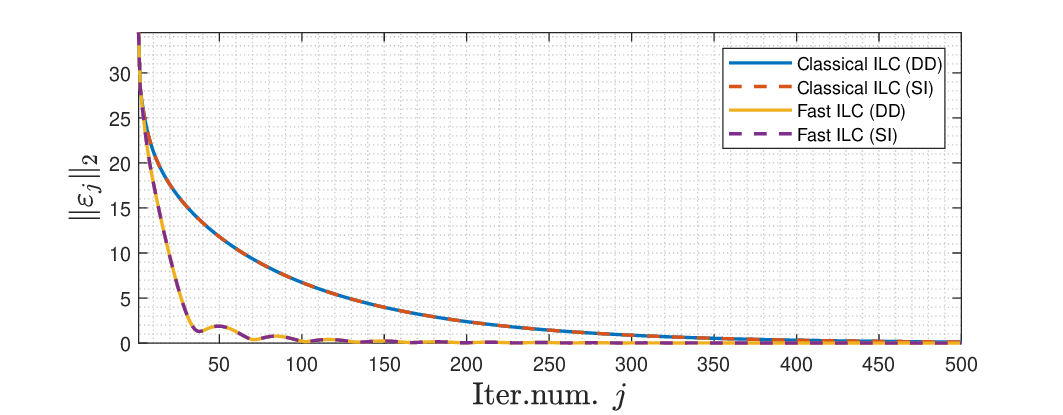}
\caption{Tracking error in the disturbance-free case.}
\label{fig_disturbance_free}
\end{figure}

In the disturbance-free case, i.e., $\bar{d}=0$, we have $\tilde{G}_{\mathrm{DD}}=\tilde{G}_{\mathrm{SI}}=G$ and $\Delta_{1}=\Delta_{2}=\delta=0$, therefore, the relative errors of the two data-driven representations are $0$. Applying such $\tilde{G}_{\mathrm{DD}}$ and $\tilde{G}_{\mathrm{SI}}$ to the online control, the tracking error $\|\varepsilon_{j}\|_{2}$ is depicted in Figure \ref{fig_disturbance_free}, which shows that fast ILC converges much faster than classical ILC. Notice that hybrid ILC should not be used in the disturbance-free case since there is no divergence term in (\ref{convergence_rate}). In the disturbance case, we set the signal-to-noise ratio (SNR) of the offline data to $10$dB . The relative errors are shown in Table \ref{table_RE}, which implies that the direct data-driven method yields larger relative errors compared to the system identification method in the cases of uniform and white Gaussian disturbance. This is to be expected, given that subspace identification is designed to exploit the full knowledge of the state-space representation behind the data with white Gaussian disturbance \citep{Overschee_2012}. On the other hand, the direct data-driven method shows better performance in the case of structured disturbance, exhibiting much smaller relative errors compared to SI. One possible explanation is that the non-parametric nature of the direct data-driven method is able to better capture the additional system dynamics implicitly responsible for the structured disturbance.

\begin{table}[h]
\centering
\caption{Relative error of two representation methods}
\begin{tabular}{lccc}
\toprule
   & Uniform & Sine & Gaussian \\ \hline
DD & 0.1843 & \textbf{2.5786e-05}  & 0.1887  \\ \hline
SI & \textbf{0.0392} & 0.0413  & \textbf{0.0145}  \\
\bottomrule
\end{tabular}\label{table_RE}
\end{table}

\begin{figure}
\centering
\subfloat[Uniformly distributed random disturbance.]{%
\resizebox*{7cm}{!}{\includegraphics[width=3.6in]{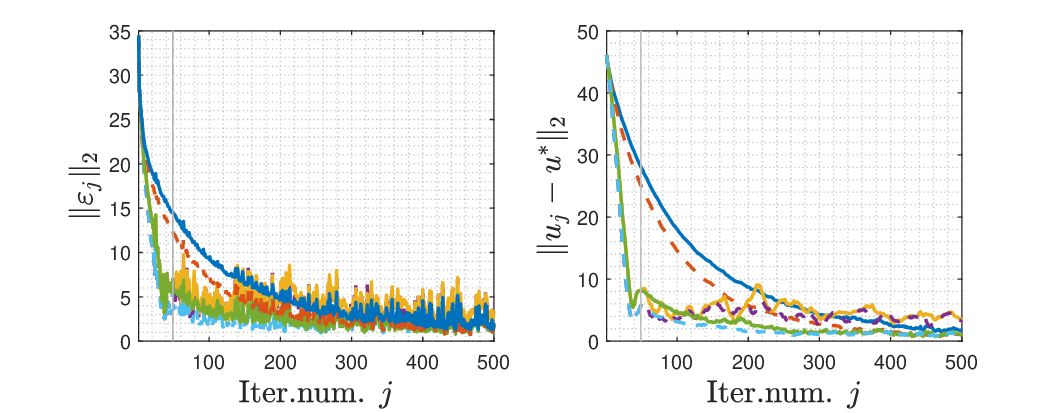}}}\hspace{5pt}
\subfloat[Sine disturbance.]{%
\resizebox*{7cm}{!}{\includegraphics[width=3.6in]{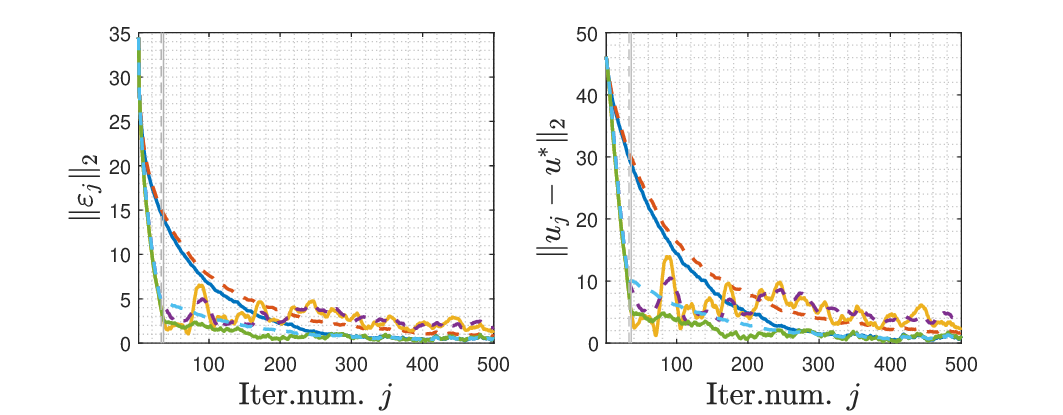}}}\hspace{5pt}
\subfloat[White Gaussian disturbance.]{%
\resizebox*{7cm}{!}{\includegraphics[width=3.6in]{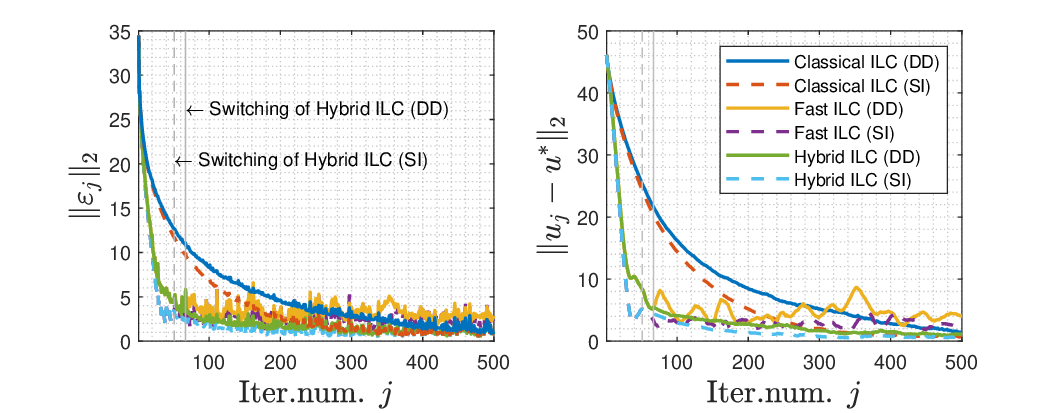}}}
\caption{Tracking and input errors in the disturbance case.} \label{fig_tracking_input_error}
\end{figure}

Next, we apply $\tilde{G}_{\mathrm{DD}}$ and $\tilde{G}_{\mathrm{SI}}$ to the above three ILC methods: classical, fast and hybrid ILC. The tracking error $\|\varepsilon_{j}\|_{2}$ and input error $\|u_{j}-u^{*}\|_{2}$ are depicted in Figure \ref{fig_tracking_input_error}, which indicates that hybrid ILC has the same fast convergence speed as fast ILC at the early iteration stage, and has an excellent asymptotic performance after switching. In the other words, hybrid ILC achieves equally good asymptotic performance as classical ILC using $100-200$ iterations less. However, in the cases of uniform and white Gaussian disturbance, each ILC method with the non-parametric representation cannot outperform the parametric representation. As discussed above, the direct data-driven method has a limitation in handling the unstructured disturbance. On the other hand, as expected, the ILC method with the non-parametric representation outperforms the parametric representation in the case of sine disturbance because of the high representation accuracy, see Table \ref{table_RE}. In this way, the effectiveness of the non-parametric data-driven representation and hybrid ILC are illustrated.

\subsection{Batch of random systems for the structured disturbance}

The above toy example shows that the direct data-driven method is promising when it comes to handling structured disturbances. In this subsection, we test the different methods on a batch of randomly generated systems affected by sine disturbance, and discuss the statistical performance. We use $100$ random systems of order $n=4$, generated using the MATLAB function $drss$. All systems are controllable, minimum-phase and of relative degree $1$. The disturbance setting is the same as in the previous subsection. The input box-constraint set $\mathcal{Q}$ and the reference $r$ are randomly generated for each system.

The average relative errors yielded by the non-parametric and parametric representations are  $0.0261$ and $0.0636$, which implies that the direct data-driven method can obtain smaller relative errors in most of experiments compared to subspace identification method. Figure \ref{fig_ave_sine_3}(a) shows that three ILC methods equipped with the non-parametric representation are more robust to the sine disturbance on average when it has a large bound $\bar{d}$, and then a better control performance can be achieved. On the other hand, note that the early iteration stage of an ILC process can reflect the convergence speed and the final stage can reflect the steady-state tracking error, we discuss the performance of different ILC methods at these two stages in the following. We select the average tracking errors at the two stages as indicators and the stage data length as $50$, then the early stage performance indicator is $\frac{1}{50}\sum_{j=1}^{50}\|\varepsilon_{j}\|_{2}$ and the final stage performance indicator is $\frac{1}{50}\sum_{j=451}^{500}\|\varepsilon_{j}\|_{2}$. A smaller indicator value implies a better performance: faster convergence speed or smaller steady-state tracking error. The box plots that counted all the random experiments are given as Figure \ref{fig_ave_sine_3}(b). The left box plot shows that hybrid ILC has better statistical performance at the early stage. Compared to the parametric data-driven representation, the ILC methods with non-parametric representation yield smaller errors, which confirm that the non-parametric representation is more robust to the large structured disturbance. Since fast ILC is sensitive to disturbances (tracking errors may oscillate significantly, see Figure \ref{fig_tracking_input_error}(b)), the $0$-th, $25$-th and $50$-th percentiles of fast ILC are larger than classical ILC. This drawback is improved by hybrid ILC. The right box plot in Figure \ref{fig_ave_sine_3}(b) implies that smaller tracking errors at the final stage can be obtained by hybrid ILC, the reason is that hybrid ILC achieves a similar asymptotic performance to classical ILC in advance and then the tracking error can be reduced further. It is worth noting that the tracking error may be decreasing very slowly if only using classical ILC (the $75$-th and $100$-th percentiles can be even larger than fast ILC), which implies the necessity of a switching mechanism.

\begin{figure}[h]
\centering
\subfloat[Average tracking errors of $100$ random experiments.]{%
\resizebox*{7cm}{!}{\includegraphics[width=3.2in]{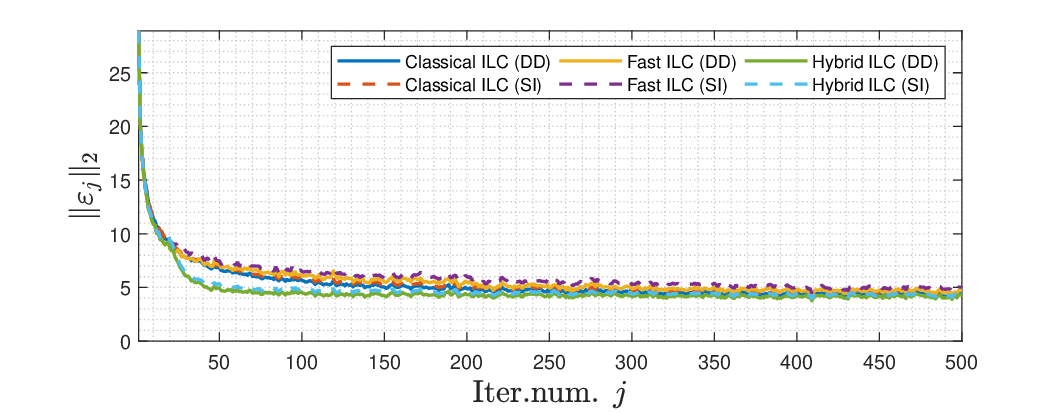}}}\hspace{5pt}
\subfloat[Average tracking errors of the first and last $50$ steps for each random experiment (C: Classical, F: Fast, H: Hybrid).]{%
\resizebox*{7cm}{!}{\includegraphics[width=3.2in]{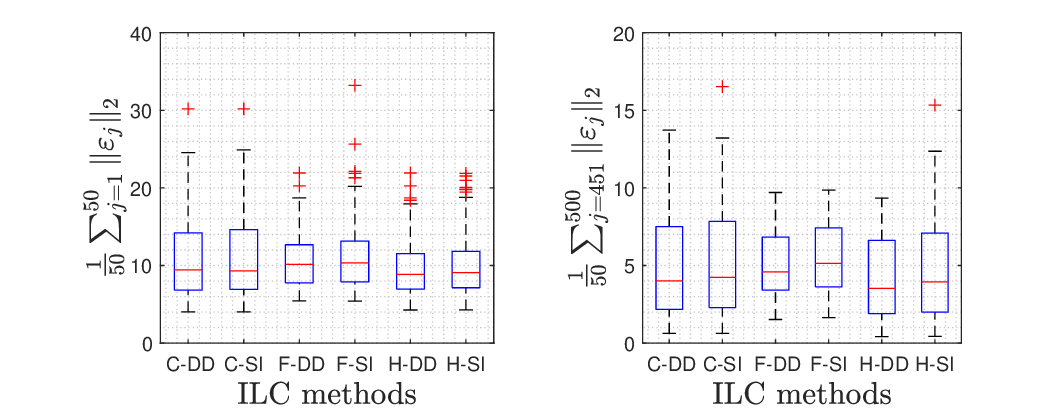}}}
\caption{Performance of different ILC methods with the sine disturbance in the case of $\bar{d}=3$.} \label{fig_ave_sine_3}
\end{figure}

The algorithm performance on a small disturbance bound is also discussed. We choose $\bar{d}=0.5$, the non-parametric and parametric representations yield similar average relative errors as $0.0192$ and $0.0211$, respectively. Figure \ref{fig_ave_sine_0.5}(a) implies that each ILC method can provide a similar performance by using the two data-driven representations. Figure \ref{fig_ave_sine_0.5}(b) also shows the similarity of the two representations in the small structured disturbance case. Furthermore, the advantages of hybrid ILC are indicated continually. Combined the result in Figures \ref{fig_ave_sine_3} and \ref{fig_ave_sine_0.5}, we conclude that the non-parametric representation is more robust to the large structured disturbance on average and performs similarly to the parametric representation with the small structured disturbance. We also suggest that hybrid ILC should always be used, which can achieve a faster tracking error decrease at the early stage and obtain a better asymptotic performance after the switching.

\begin{figure}[h]
\centering
\subfloat[Average tracking errors of $100$ random experiments.]{%
\resizebox*{7cm}{!}{\includegraphics[width=3.2in]{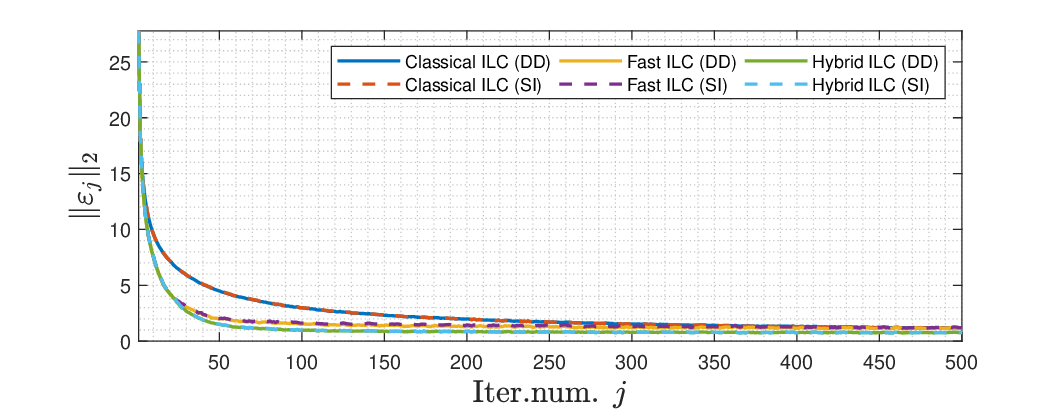}}}\hspace{5pt}
\subfloat[Average tracking errors of the first and last $50$ steps for each random experiment (C: Classical, F: Fast, H: Hybrid).]{%
\resizebox*{7cm}{!}{\includegraphics[width=3.2in]{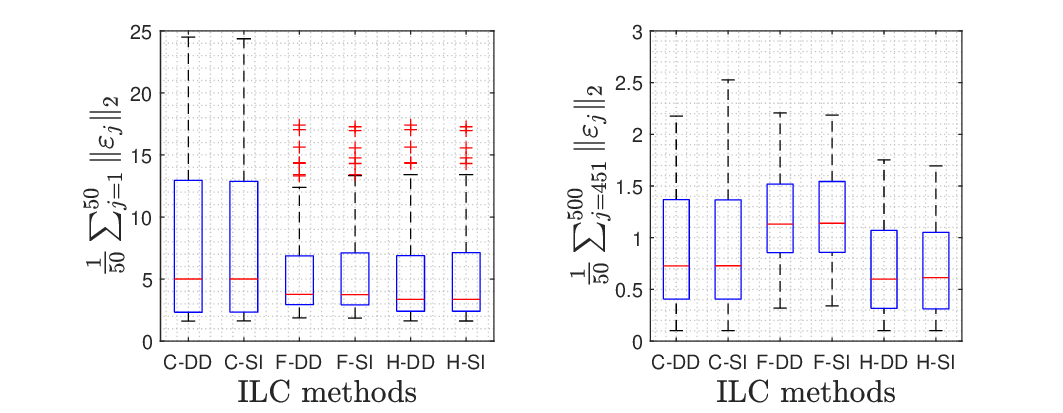}}}
\caption{Performance of different ILC methods with the sine disturbance in the case of $\bar{d}=0.5$.} \label{fig_ave_sine_0.5}
\end{figure}

\begin{remark}
The discussion in this subsection is based on the statistical performance, there also exist some systems that the non-parametric representation cannot outperform the parametric one. The system structure under which direct data-driven method can provide a superior performance with the structured disturbance will be studied in the future work.
\end{remark}

\begin{remark}
The comparison between the non-parametric and parametric representations is based on the same order $n$, i.e., using $n$ in (\ref{svd_n}) and the MATLAB function $n4sid$. The essence of structured disturbances is an additional system dynamics of order $m$, hence, using the order $n+m$ in the function $n4sid$ can perform equally well if not better than the direct data-driven method. In practice, however, the additional order $m$ is usually unknown, but the system order $n$ is known a priori, for example the robotic motion system in the next subsection, therefore, the direct data-driven method can be more practical in some applications.
\end{remark}

\subsection{Case study on the robotic motion system}

In this subsection, we provide a case study on a high-precision two-axis robotic motion system, see \cite{Bristow_2006_2}. Since the X and Y axes are decoupled, we only consider the Y-axis for simplicity and the feedforward control of the X-axis can be deployed in a similar way. The Y-axis model is

\begin{small}
\begin{equation*}
G_{Y}(s)
=
\frac{3500\begin{pmatrix}\frac{1}{88.5^{2}}s^{2}+\frac{0.1}{88.5}s+1\end{pmatrix}}
{s^{2}\begin{pmatrix}\frac{1}{89.5^{2}}s^{2}+\frac{0.1}{89.5}s+1\end{pmatrix}}
\ \
\begin{pmatrix}
\frac{\text{mm}}{\text{V}}
\end{pmatrix}
\end{equation*}
\end{small}and the feedback controller is

\begin{small}
\begin{equation*}
C_{Y}(s)
=
\frac{10(s+1.9)(s+100)}
{(s+1.15)(s+200)}
\ \
\begin{pmatrix}
\frac{\text{V}}{\text{mm}}
\end{pmatrix}
\end{equation*}
\end{small}in which the tracking error dynamics is stabilized with a feedback controller $C_{Y}(s)$ and ILC is used to generate the feedforward input $u$. The control structure is shown in Figure \ref{fig_control_structure}.

\begin{figure}[h]
\centering
\resizebox{250pt}{60pt}{%
\begin{tikzpicture}
[
     node distance = 10mm and 10mm,
        box/.style = {draw, text width=18mm, minimum height=8mm, align=center},
pics/adjbox/.style = {
              code = {\node (@adjbox) [box] {#1};
                     \coordinate[below=3mm of @adjbox.south west] (-adjb);
                     \coordinate[above=3mm of @adjbox.north east] (adjt);
                     \draw[->] (-adjb) -- (adjt);
                     \coordinate (-in1) at ([yshift=+3mm] @adjbox.west);
                     \coordinate (-in2) at ([yshift=-3mm] @adjbox.west);
                     \coordinate (-out) at (@adjbox.east);}
                     },
        sum/.style = {circle, draw, node contents={}},
                > = Stealth
]

\node (n_controller) [ultra thick][box] {$C_{Y}$};
\coordinate[right=of n_controller] (aux1);

\node (n_plant) [ultra thick][box,right =of aux1] {$G_{Y}$};

\node (n_sum_1) [ultra thick][sum,right=of n_controller];
\node (n_sum_2) [ultra thick][sum,left=of n_controller];
\node (n_sum_3) [ultra thick][sum,right=of n_plant];

\draw[ultra thick][->] (n_sum_2) -- (n_controller)node[pos=0.4,above] {$e$};
\draw[ultra thick][->] (n_controller) -- (n_sum_1);
\draw[ultra thick][->] (n_sum_1) -- (n_plant);

\draw[ultra thick][->] (n_plant)--(n_sum_3)node[pos=0.4,above] {$y$};

\coordinate[right=of n_sum_3] (aux_n_sum_3);
\coordinate[right=of aux_n_sum_3] (aux_n_sum_3_right);
\coordinate[below=of n_controller] (aux_n_controller);
\draw[ultra thick][->] (n_sum_3) -- (aux_n_sum_3_right)node[pos=0.5,above] {$\tilde{y}$};
\draw[ultra thick][->] (aux_n_sum_3) |- (aux_n_controller) -| (n_sum_2)node[pos=0.9,right] {$\mathbf{-}$};

\coordinate[left=of n_sum_2] (aux_r);
\draw[ultra thick][->] (aux_r) -- (n_sum_2)node[pos=0.5,above] {$r$};

\coordinate[above=of n_sum_1] (aux_n_sum_1);
\draw[ultra thick][->] (aux_n_sum_1) -- (n_sum_1)node[pos=0.3,right] {$u$}node[pos=0.9,right] {$\mathbf{+}$};

\coordinate[above=of n_sum_3] (aux_n_sum_3);
\draw[ultra thick][->] (aux_n_sum_3) -- (n_sum_3)node[pos=0.3,right]{$d$}node[pos=0.9,right] {$\mathbf{+}$};
\end{tikzpicture}
}
\caption{Control structure of the precision motion system.}
\label{fig_control_structure}
\end{figure}

The transfer function from reference $r$ and feedforward $u$ to position $y$ is

\begin{small}
\begin{equation*}
y(s)
=
\frac{G_{Y}}{1+G_{Y}C_{Y}}u(s)
+
\frac{G_{Y}C_{Y}}{1+G_{Y}C_{Y}}r(s)
\end{equation*}
\end{small}in which the sampling rate is set to be $100\text{Hz}$. For the system identification, the data trajectories of $y$, $u$ and $r$ with the length $T=500$ are collected. The length of the initial trajectory is $T_{\mathrm{ini}}=12$, which is equal to the lag of the closed-loop system. The data length in one trial is $N=10$. The box-constraint on $u_{j}(k)$ is $0\text{V}\sim10\text{V}$. To start the iteration, each element in the initial trajectory $u_{0}$ is selected as the center of the box-constraint, i.e., $u_{0}(k)=5$ for $k\in\{0\ldots N-1\}$. Each element of the invariant reference $r$ is randomly generated from $0.77\text{mm}\sim0.81\text{mm}$. The initial condition is randomly generated and invariant in each trial. The maximal iteration number is fixed as $M=20$ and the time window length is $W=5$.

Seven methods are compared: classical, fast and hybrid ILC equipped with the non-parametric and parametric representations, and the feedback controller $C_{Y}$ (without the feedforward). In the disturbance-free case, the tracking error $\|\varepsilon_{j}\|_{2}$ is shown in Figure \ref{fig_case_study_disturbance_free}. Except for the similar conclusion to the above subsections, we also observe that a better tracking performance is obtained by the ILC methods, compared to only using the feedback controller, which illustrates the necessity of the feedforward in real applications.

\begin{figure}[h]
\centering
\includegraphics[scale=0.5]{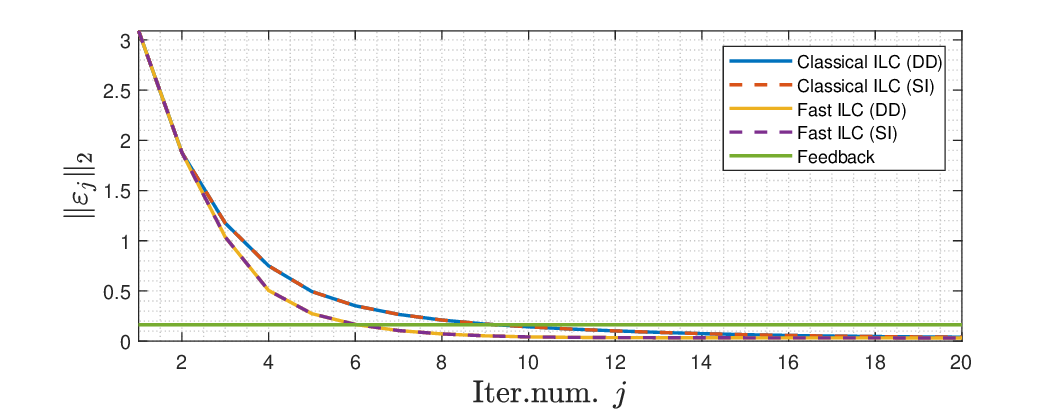}
\caption{Tracking error in the disturbance-free case.}
\label{fig_case_study_disturbance_free}
\end{figure}

\begin{figure}
\centering
\subfloat[Uniformly distributed random disturbance.]{%
\resizebox*{7cm}{!}{\includegraphics[width=3.2in]{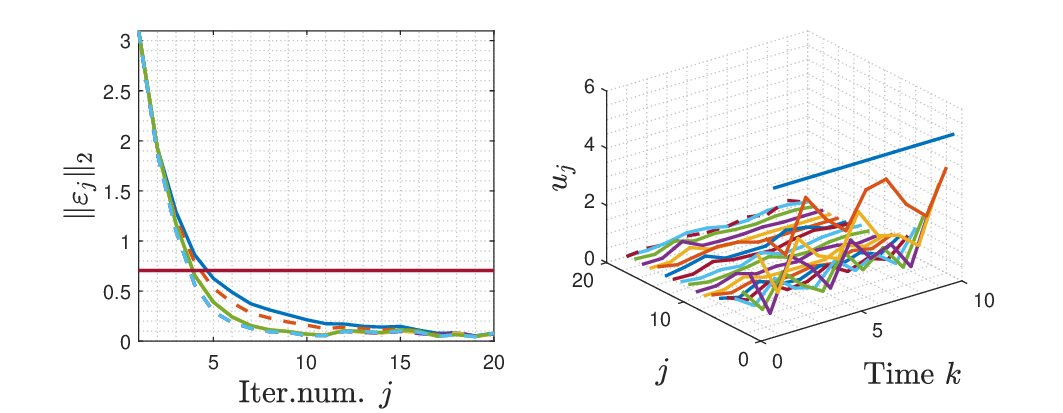}}}\hspace{5pt}
\subfloat[Sine disturbance.]{%
\resizebox*{7cm}{!}{\includegraphics[width=3.2in]{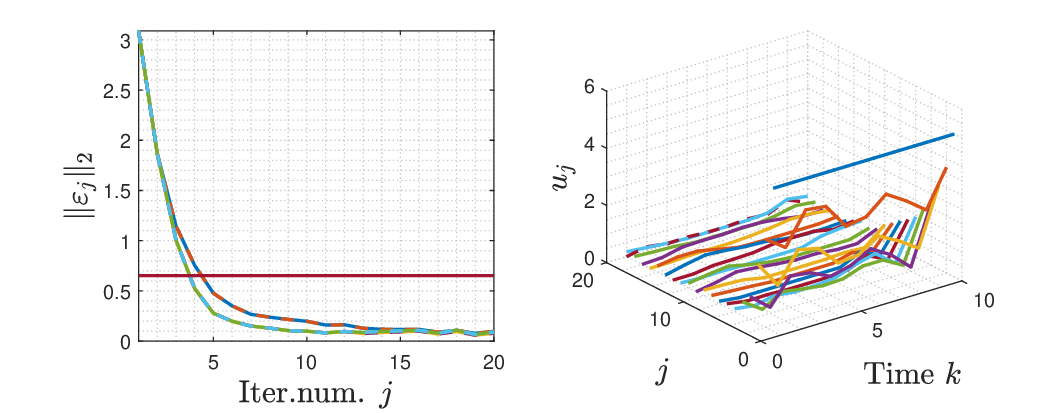}}}\hspace{5pt}
\subfloat[White Gaussian disturbance.]{%
\resizebox*{7cm}{!}{\includegraphics[width=3.2in]{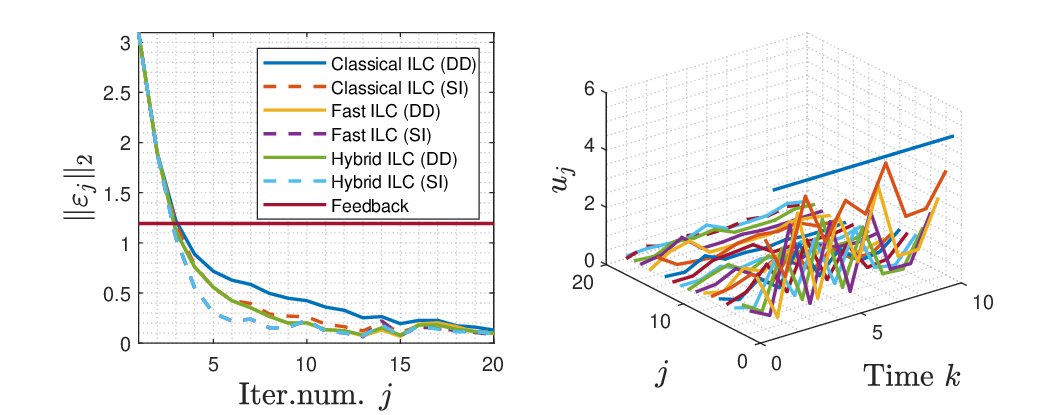}}}
\caption{Tracking error and control convergence in the disturbance case.} \label{fig_case_study_tracking}
\end{figure}

In the disturbance case, since the required control accuracy is high, we choose the disturbance level as $\bar{d}=0.1$ and the standard deviation of white Gaussian disturbance as $\sigma=0.1$. The performance is illustrated by Figure \ref{fig_case_study_tracking}. The left column of Figure \ref{fig_case_study_tracking} is the tracking errors under different disturbances. As expected for the structured disturbance, the non-parametric and parametric representations perform similarly since the disturbance bound is small. In the cases of uniform and white Gaussian disturbance, the parametric representation performs better than the non-parametric one. Moreover, hybrid ILC performs similarly with fast ILC because of the minor disturbance, which leads to a small $\delta$ and benefits the convergence from the first term in (\ref{convergence_rate}). This similarity also implies that hybrid ILC is not worse than classical and fast ILC under the small disturbance. In fact, the performance improvement introduced by hybrid ILC is almost a free lunch, hybrid ILC requires the calculation, comparison and storage of the averaged tracking error norm, and these operations are cheap in the modern digital controller. The right column of Figure \ref{fig_case_study_tracking} shows the control convergence of hybrid ILC with the non-parametric representation: $u_{j}$ converges to $u^{*}$ (dotted line) with the increase of $j$. In this way, Figure \ref{fig_case_study_tracking} confirms the conclusions of previous subsections and the effectiveness of the proposed method is verified.

\section{Conclusion}\label{sec6}

This work presents a fast data-driven ILC method for linear systems with unknown dynamics under the presence of bounded output disturbances and input box-constraints. A method based on the non-parametric data-driven representation of the system dynamics is compared to a subspace identification method. The non-parametric data-driven method shows a promise in dealing with structured disturbances. On the other hand, since the data-driven representation and the online measurement are inexact, we design the ILC based on the inexact classical gradient and Nesterov accelerated gradient methods with proven convergence results. Moreover, we propose a hybrid ILC method which has a faster convergence speed and a better asymptotic performance.

\section*{Acknowledgements}

Ivan Markovsky is an ICREA professor at the International Centre for Numerical Methods in Engineering, Barcelona. The research leading to these results has received funding from: Fond for Scientific Research Vlaanderen (FWO) projects G033822N, G081222N, G0A0920N. European Union’s Horizon 2020 research and innovation programme under the Marie Skłodowska-Curie grant agreement No. 953348.

\bibliographystyle{apacite}
\bibliography{interactapasample}

\end{document}